\newcommand{\rref}[1]{rule~\ref{#1}}
\newcommand{\Rref}[1]{Rule~\ref{#1}}
\patchcmd{\thebibliography}{\clubpenalty4000}{\clubpenalty10000}{}{}
\patchcmd{\thebibliography}{\widowpenalty4000}{\clubpenalty10000}{}{}
\acrodef{API}{Application Programming Interface}  \acused{API}
\newcommand{\xxrightarrow}[1]{
  \xrightarrow{\raisebox{-1.35pt}[0pt][0pt]{$\scriptstyle #1$}}
}
\newcommand{\twoheadrightarrowtail}{
  \rightarrowtail\mathrel{\mkern-14mu}\rightarrow
}
\newcommand*{\eg}{e.g.\@\xspace}
\newcommand*{\ie}{i.e.\@\xspace}
\newcommand{\posc}[1]{_{\makebox[\widthof{\texttt{x}}]{%
  \smaller[3]%
  \StrLen{#1}[\mystringlen]%
  \ifthenelse{\mystringlen < 2}{}{\hspace{0.4em}}%
  \ensuremath{#1}%
}}\ifmmode\else\@\fi}
\newcommand{\id}[1]{\texttt{#1}}
\newcommand{\posid}[2]{\id{#1}_{#2}}
\newcommand{\qhole}[1]{\ensuremath{\mathcolor{violet}{{\scalebox{0.9}{$\llparenthesis$}}} #1 \mathcolor{violet}{{\scalebox{1}{$\rrparenthesis$}}}}}
\newcommand{\qholec}[2][1]{\makebox[#1\fontcharwd\font`x]{\qhole{#2}}} 
\newcommand{\rhole}[2][1]{\qholec[#1]{\small{\rightarrow}#2}}
\newcommand{\ty}[1]{\mathbf{#1}}
\newcommand{\tyINT}{\ty{int}}
\newcommand{\UV}[1]{\text{?$#1$}}     
\let\old@lstKV@SwitchCases\lstKV@SwitchCases
\def\lstKV@SwitchCases#1#2#3{}
\let\lstKV@SwitchCases\old@lstKV@SwitchCases
    \def\lst@PlaceNumber{\lst@linebgrd}%
\def\lst@PlaceNumber{\llap{\normalfont
                \lst@numberstyle{\thelstnumber}\kern\lst@numbersep}\lst@linebgrd}\\%
\def\lst@PlaceNumber{\rlap{\normalfont
                \kern\linewidth \kern\lst@numbersep
                \lst@numberstyle{\thelstnumber}}\lst@linebgrd}%
\def\mathcolor#1#{\@mathcolor{#1}}
\def\@mathcolor#1#2#3{%
  \protect\leavevmode
  \begingroup
    \color#1{#2}#3%
  \endgroup
}
\definecolor{codekeywords}{rgb}{0.2,0.2,0.5}
\definecolor{codegreen}{rgb}{0.0,0.4,0.2}
\definecolor{codegray}{rgb}{0.5,0.5,0.5}
\definecolor{codepurple}{rgb}{0.58,0,0.82}
\colorlet{codebghl}{lightgray!70!white}
\DeclareRobustCommand{\hl}[1]{\tcbox{#1}}
\preto\equation{\setcounter{equation}{0}}
\pretocmd\start@gather{\setcounter{equation}{0}}{}{}
\newcommand{\plhdr}{\mathord{\color{black!33}\bullet}}%
\lstdefinestyle{mystyle}{
    backgroundcolor=\color{white},
    commentstyle=\color{codegreen},
    keywordstyle=\bfseries\color{codekeywords},
    numberstyle=\tiny\color{codegray},
    stringstyle=\color{codepurple},
    basicstyle=%
      \ttfamily%
      \mdseries%
      \lst@ifdisplaystyle\scriptsize\fi,  
    numbers=left,      
    xleftmargin=2em,   
}
\lstdefinelanguage{MiniMod}{
    keywords={mod, include, import, val, var, lam, null},
    sensitive=true,           
    comment=[l]{//},          
    comment=[s]{/*}{*/},      
    string=[b]"               
}%
\newcommand{\MiniMod}{\lstinline[language=MiniMod]}
\newcommand{\Java}{\lstinline[language=Java]}
\newcommand{\inlineJava}{\lstinline[language=Java]}
\algrenewcommand\algorithmicindent{1.0em}%
\algnewcommand{\LeftComment}[1]{\(\triangleright\) #1}
\newlength{\trianglerightwidth}
\algnewcommand{\LineComment}[1]{\Statex \hskip\ALG@thistlm $\triangleright$ #1}
\algnewcommand{\LineCommentCont}[1]{\Statex \hskip\ALG@thistlm%
  \parbox[t]{\dimexpr\linewidth-\ALG@thistlm}{\hangindent=\trianglerightwidth \hangafter=1 \strut$\triangleright$ #1\strut}}
\algnewcommand{\Where}{\textbf{where}}
\algrenewcommand\algorithmicfunction{\textbf{fun}}
\algnewcommand{\FnName}[1]{\textsc{#1}}
\algnewcommand{\FnType}[1]{\textsc{#1}}
\algnewcommand{\SAlgText}[1]{%
  \parbox[t]{\dimexpr\linewidth-\ALG@thistlm}{\strut#1\strut}}
\algnewcommand{\AlgText}[1]{\Statex \hskip\ALG@thistlm%
  \parbox[t]{\dimexpr\linewidth-\ALG@thistlm}{\strut#1\strut}}
\newcommand{\svar}[1]{\mathit{#1}}                                
\newcommand{\tuple}[1]{(#1)}
\newcommand{\set}[1]{\left\{\mkern1mu #1 \mkern1mu\right\}}
\newcommand{\qBase}[4]{
  #2 \xvisible[#1]{#3} #4}
\newcommand{\orderSyntax}{\raisebox{1.5pt}[0pt][0pt]{\ensuremath{\scriptscriptstyle o}}}
\newcommand{\cTrue}{\mathsf{true}}
\newcommand{\cExists}[1]{\exists#1.\:}
\newcommand{\cConj}{\mathrel{*}}
\newcommand{\cEq}{\mathbin{\scriptstyle\smash{\stackrel{?}{=}}}}    
\newcommand{\cUser}[1]{\mathsf{#1}}                                 
\newcommand{\cNew}{\nabla}
\newcommand{\cEdge}[1][]{\scopeedget[#1]}
\newcommand{\cQuery}[5][]{\smash{
    \qBase{#1}{#2}{#3}{#4} \mathrel{\mapsto} #5
}}
\newcommand{\cEmp}{\mathsf{emp}}
\newcommand{\cFalse}{\mathsf{false}}
\newcommand{\cSingle}[2]{\mathsf{single}(#1, #2)}
\newcommand{\cForall}[3]{\forall#1\: \mathsf{in}\: #2.\: #3}
\newcommand{\cDataOf}[2]{\mathsf{dataOf}(#1, #2)}
\newcommand{\reclos}[1]{\ensuremath{#1^\ast}}
\newcommand{\reopt}[1]{\ensuremath{#1{}^?}}
\newcommand{\lblOrdLexVAR}{\lblVAR < \lblIMPORT < \lblLEX}
\newcommand{\lblOrdLexMOD}{\lblMOD < \lblLEX}
\newcommand{\rTurnstile}{\, \leftarrow\, } 
\newcommand{\hoPred}[2]{\mathsf{#1}_{#2}}                     
\newcommand{\stBigAnd}{{\raisebox{-.4ex}{\scalebox{2}{$\ast$}}}}
\newcommand{\opsRSState}[4]{
  \big\langle\, #1\: \big|\: #2\: \big|\: #3\: \big|\: #4\, \big\rangle
}
\newcommand{\SG}{\mathcal{G}}
\newcommand{\lbl}[1]{\mathsf{#1}}
\newcommand{\lblLEX}{\lbl{LEX}}
\newcommand{\lblIMP}{\lbl{IMP}}
\newcommand{\lblMOD}{\lbl{MOD}}
\newcommand{\lblVAR}{\lbl{VAR}}
\newcommand{\lblIMPORT}{\lbl{IMP}}
\newcommand{\javaTotalTests}{2528}
\newcommand{\chocopyTotalTests}{196}
\newcommand{\fgjTotalTests}{49}
\newcommand{\allTotalTests}{2773}
\newcommand{\allSuccessfulTests}{2575}
\newcommand{\javaOriginalTestSet}{4105}
\newcommand{\chocopyOriginalTestSet}{308}
\newcommand{\fgjOriginalTestSet}{52}
\theoremstyle{acmplain}
\newtheorem*{theorem*}{Theorem}
\theoremstyle{remark}
\newtheorem*{remark}{Remark}
  \title{Language-Parametric Reference Synthesis (Extended)}
  \title{Language-Parametric Reference Synthesis}
\author{Daniel A. A. Pelsmaeker}
\affiliation{
  \department{Software Technology}
  \institution{Delft University of Technology}
  \city{Delft}
  \country{Netherlands}
}
\email{d.a.a.pelsmaeker@tudelft.nl}
\author{Aron Zwaan}
\affiliation{
  \department{Software Technology}
  \institution{Delft University of Technology}
  \city{Delft}
  \country{Netherlands}
}
\email{a.s.zwaan@tudelft.nl}
\author{Casper Bach}
\affiliation{
  \department{Software Technology}
  \institution{Delft University of Technology}
  \city{Delft}
  \country{Netherlands}
}
\email{c.b.poulsen@tudelft.nl}
\author{Arjan J. Mooij}
\affiliation{
  \institution{TNO-ESI}
  \city{Eindhoven}
  \country{Netherlands}
}
\affiliation{
  \institution{Z\"urich University of Applied Sciences}
  \city{Winterthur}
  \country{Switzerland}
}
\email{arjan.mooij@tno.nl}
\keywords{references, synthesis, semantics, scope graphs}
\begin{document}

\begin{abstract}
  Modern Integrated Development Environments (IDEs) offer automated refactorings to aid programmers in developing and maintaining software.
  However, implementing sound automated refactorings is challenging, as refactorings may inadvertently introduce name-binding errors or cause references to resolve to incorrect declarations.
  To address these issues, previous work by Sch\"afer et al.\@ proposed replacing concrete references with \emph{locked references} to separate binding preservation from transformation.
  Locked references vacuously resolve to a specific declaration, and after transformation must be replaced with concrete references that also resolve to that declaration.
  Synthesizing these references requires a faithful inverse of the name lookup functions of the underlying language.

  Manually implementing such inverse lookup functions is challenging due to the complex name-binding features in modern programming languages.
  Instead, we propose to automatically derive this function from type system specifications written in the Statix meta-DSL.
  To guide the synthesis of qualified references we use \emph{scope graphs}, which represent the binding structure of a program, to infer their names and discover their syntactic structure.

  We evaluate our approach by synthesizing concrete references for locked references in \javaTotalTests{} Java, \chocopyTotalTests{} ChocoPy, and \fgjTotalTests{} Featherweight Generic Java test programs.
  Our approach yields a principled language-parametric method for synthesizing references.
\end{abstract}

\maketitle

\pagebreak[4]  

\section{Introduction}%
\label{sec:introduction}

\begin{displayquote}[{\citet[\S 3]{EkmanSV08}}][{.}]
  \textins*{P}reserving bindings is at the heart of any refactoring\\ that moves, creates, or duplicates code
\end{displayquote}

\noindent
As software projects evolve, their code is frequently refactored to improve their structure and maintainability.
Refactoring often involves copying or moving code from one code unit (such as a class, module, or trait) to another, in a way that preserves the program's behavior.
A crucial aspect of behavior-preserving transformations is name binding preservation, to ensure references in refactored code resolve to the same distinct declarations as before.
While behavior preservation also needs control- and data flow analysis, name binding preservation can be achieved using only the static semantic analysis of the program.
However, due to the sophisticated name binding features found in many modern programming languages, preserving the name resolution semantics of code across transformations is generally challenging.

To illustrate the complexity of reasoning about advanced name binding features, consider the Java program shown in~\cref{fig:java-rename-example-before}.
If we rename the field \Java|x| (line~2) to \Java|y|, Java's static semantics would cause the reference to \Java|y| on line 7 (in method \Java|foo|) to resolve to the newly renamed field \Java|y| on line~2, rather than the intended declaration of \Java|y| on line 5.
This undesired change would alter the name binding structure of the program.
To prevent this, the reference to \Java|y| on line 7 should be \emph{qualified} as \Java|Outer.this.y|, as shown in the refactored example in \cref{fig:java-rename-example-after}.

Transformations that require name binding preservation are common across many refactorings, such as those from Fowler's catalog~\cite{Fowler99}.
Yet, manually refactoring code is time-consuming and error-prone.
Consequently, many modern Integrated Development Environments (IDEs) provide automated refactorings such as \textsc{Rename}, \textsc{Inline/Extract Method}, and \textsc{Pull Up/Push Down}~\cite{Fowler99}, which attempt to automatically \emph{requalify} references to maintain the program's binding structure.

However, even popular IDEs for mainstream languages struggle to implement sound refactorings.
For example, \citet{EkmanSV08} identify several bugs in Eclipse~3.4 where automated refactorings inadvertently altered the program's binding structure.
These errors arise from the difficulty of accurately determining which references need to be fixed and computing the correct requalifications.
Not only references in the modified code, but references \emph{throughout the entire code base} may require requalification.
Ensuring both \emph{soundness} (preserve name bindings) and \emph{completeness} (finding all possible requalifications) is particularly difficult.

\Citeauthor{EkmanSV08} conclude that these challenges are ``not related to the core ingredients of the implemented refactoring, \textins*{but} inherent to the complexity of name binding rules in mainstream languages.''
As a result, existing research on the sound requalification of references is often language-specific, focusing on mainstream languages like Java~\cite{SchaferTST12}.
Implementing sound automated refactorings for other languages, like Domain-Specific Languages (DSLs) with small language developer teams, can require a prohibitively high effort.
As such, a more principled and language-parametric approach to guarantee name binding preservation is needed.

\begin{figure}[t]
  \input{fig/0100-java-rename-example}
  \caption{
    \textsc{Rename} refactoring of a small Java program, where renaming the field \Java|x| to \Java|y| on line 2 requires the reference \Java|x| on line 7 to be appropriately qualified.
  }%
  \label{fig:java-rename-example}
\end{figure}

\begin{figure}[t]
  \input{fig/0100-java-rename-example-intermediate-steps}
  \caption{
    Intermediate steps for performing the \textsc{Rename} refactoring from~\cref{fig:java-rename-example} using locked references.
    After locking the relevant reference $\id{y}$ to declaration $\posid{y}{2}$ (\hyperref[fig:java-rename-example-intermediate-steps-before]{a}) and performing the transformation (\hyperref[fig:java-rename-example-intermediate-steps-after]{b}), our approach would synthesize a solution for the locked reference and obtain~\cref{fig:java-rename-example-after}.
  }%
  \label{fig:java-rename-example-intermediate-steps}
\end{figure}

\subsection{Locked References}%
\label{subsec:language-parametric-locked-references}

\Citet{EkmanSV08} observe that many bugs in automated refactorings could ``be avoided if a set of carefully crafted building blocks were available to refactoring developers.''
One such building block is \emph{locked references}%
\footnote{
  Terminology introduced by~\citet{SchaferTST12}.
  Also referred to as ``bound names''~\citep{ecoop09refactoring}, ``locked names''~\citep{SchaferMOOPSLA2010}, and ``locked bindings''~\citep{SchaferTST12}.
  We use ``locked references'' throughout this paper.
},
proposed by Sch\"afer et al.\@ in previous work~\cite{SchaferEM08,SchaferMOOPSLA2010}.
A locked reference is an abstract reference that continues to refer to the same unique declaration even if code is moved or the declaration is renamed.
This ensures that transformations cannot cause such a reference to accidentally capture a different declaration.

The following diagram summarizes program transformation with locked references:

\[
  \mathcal{P}
  \xrightarrow{\textsf{lock}}
  \mathcal{P}^{}_{\text{locked}}
  \xrightarrow{\textsf{transform}}
  \mathcal{P}'_{\text{locked}}
  \xrightarrow{\textsf{unlock}}
  \mathcal{P}'
\]%
\vspace{0em}

\noindent
Before refactoring, we first `$\mathsf{lock}$' each relevant concrete reference by replacing it with a locked reference pointing to the original declaration.
In~\cref{fig:java-rename-example-intermediate-steps-before} we replace the concrete reference $\id{y}$ (line~7) with a locked reference $\rhole[5]{\posid{y}{2}}$ to the declaration $\posid{y}{2}$ on line~5.\footnote{
  Our syntax for locked references $\rhole[5]{d}$ is inspired by the syntax \citet{OmarVHAH17} use for holes.
} (We use subscript indices to distinguish different occurrences of the same name, but the indices are not part of the syntax.)

Next, we `$\mathsf{transform}$' the program as required for the refactoring, renaming declarations and moving code.
In~\cref{fig:java-rename-example-intermediate-steps-after}, the declaration on line 2 is renamed to $\id{y}$.
Finally, we `$\mathsf{unlock}$' each locked reference in the program by replacing it with a \emph{synthesized} concrete reference that unambiguously resolves to the intended declaration.
In this case, unlocking replaces the locked reference with \inlineJava|Outer.this.y|, maintaining the name binding semantics of the program (see~\cref{fig:java-rename-example-after}).

Every step in this pipeline gives rise to challenges, but in this paper we focus on the key challenge of synthesizing concrete references when unlocking locked references.
The program should remain well-typed and synthesized references should resolve to their intended declarations.
Separating name binding preservation from the transformation guarantees that refactorings preserve name bindings, and also makes it easier to implement refactorings.

There are numerous potential applications of reference synthesis.
In the line of work by~\citet{SchaferEM08,SchaferTST12}, it can be applied to implement sound (editor) refactorings.
Furthermore, it provides a powerful transformation tool for implementing sound transformations of DSL programs or performing large-scale codebase transformations aiming to improve the overall code quality.
However, one can also envision \emph{user-extensible refactoring tools} (such as presented by Li and Thompson~\citep{LiT12-19}) or \emph{transformation languages} (such as IntelliJ's structural search and replace) that need to preserve name bindings.
Finally, it could be used to build an editor service that suggests fixes for type errors (\eg, \textsc{quick fix} in Eclipse).

\subsection{Language-Parametric Reference Synthesis}%
\label{subsec:language-parametric-reference-synthesis}

Following \citet{SchaferEM08,SchaferTST12}, a reference synthesis function can be thought of as the right inverse of a reference resolution function.
That is, if $\mathit{QRef}$ is the set of qualified references, $\mathit{Decl}$ is the set of uniquely identified declarations, and the function $\mathsf{resolve}_p : \mathit{QRef} \rightharpoonup \mathit{Decl}$ resolves a reference at some location $p$ in the program, then locked reference synthesis should be a function $\mathsf{synthesize}_p : \mathit{Decl} \rightharpoonup \mathit{QRef}$ such that $\mathsf{resolve}_p(\mathsf{synthesize}_p(d)) = d$ for any $p$.\footnote{\Citeauthor{SchaferEM08} use ``lookup'' instead of ``resolve'' and ``access'' instead of ``synthesize'', but the idea is the same.}
The $\mathsf{lock}$ function uses $\mathsf{resolve}$ to obtain the target declaration when replacing a concrete reference with a locked reference, and conversely, $\mathsf{unlock}$ uses the $\mathsf{synthesize}$ function to replace the locked reference with a concrete reference.

The reference synthesis pipeline shown above is conceptually \emph{language-parametric}.
However, as discussed before, implementing correct reference synthesizers manually is error-prone and time-consuming.
In this paper, we present a \emph{language-parametric} approach to derive the $\mathsf{synthesize}$ function automatically from declarative type system specifications, letting language designers generically synthesize valid concrete references for programs with multiple locked references.
The goal of reference synthesis is to find for each locked reference in a program a valid concrete reference that resolves to the intended declaration.

We derive the $\mathsf{synthesize}$ function from \emph{only} a declarative specification of the language's static semantics, specified using the \emph{Statix} specification language~\cite{AntwerpenPRV18,RouvoetAPKV20}.
Statix allows syntax-directed typing rules to be specified, uses \emph{scope graphs} as a declarative model of static name binding and name resolution~\cite{NeronTVW15}, and generates executable type checkers by interpreting specifications as constraint programs.
In our implementation of $\mathsf{synthesize}$, we reinterpret the \emph{name resolution queries} from the specification to infer syntax for concrete references that resolve to a particular target declaration, guaranteeing name binding preservation.
We reuse the Statix solver (see~\cref{sec:scope-graphs-and-statix}) to validate that the syntax we infer is \emph{sound} with respect to the typing rules and represents a reference that resolves to the intended declaration.

This paper makes the following technical contributions:

\begin{itemize}

  \item
    We present a language-parametric implementation of the $\mathsf{synthesize}$ function (see~\cref{sec:operational-semantics} and~\cref{sec:heuristics}).
    This function automatically synthesizes concrete references that resolve to the specified declaration, and that are sound with respect to a type system specification written in Statix.

  \item 
    We evaluate our implementation on \allTotalTests{} test programs of Java, ChocoPy, and Featherweight Generic Java (\cref{sec:evaluation}).
    Our results demonstrate that our approach applies to mainstream languages with complex name binding semantics without modifying their typing rules.

\end{itemize}

\noindent
We first (\cref{sec:scope-graphs-and-statix}) introduce scope graphs and Statix.
Next, in~\cref{sec:reference-synthesis-by-example} we illustrate our reference synthesis algorithm.
Then, in~\cref{sec:operational-semantics} we give an operational semantics of our $\mathsf{synthesize}$ function's implementation, and the heuristics we apply in~\cref{sec:heuristics}.
We evaluate our implementation in~\cref{sec:evaluation}, and discuss related work in~\cref{sec:related-work}.
We conclude in~\cref{sec:conclusion}.

\section{Scope Graphs and Statix}%
\label{sec:scope-graphs-and-statix}

This paper presents a \emph{language-parametric} approach to synthesizing concrete references.
To this end, we build on existing work: (1) \emph{scope graphs} as a language-parametric model of name binding,
and (2) \emph{Statix} as a uniform representation of typing rules.

In this section we first describe what scope graphs are (\cref{subsec:scope-graphs}), and how they let us resolve references via graph search (\cref{subsec:scope-graph-queries}).
Then we provide a high-level introduction to the Statix language (\cref{subsec:statix-rules}) and its constraint solver (\cref{subsec:statix-constraint-solver}).

\subsection{Scope Graphs}%
\label{subsec:scope-graphs}
The example program in~\cref{fig:scope-graph-example-a} contains two declarations named~$\id{x}$, namely $\posid{x}{1}$ on line 1 and $\posid{x}{2}$ on line 3, and a named reference $\id{x}$ on line 7.\footnote{
  We use \emph{subscript indices} to distinguish different \emph{occurrences} of a particular name.
  For example, $\posid{x}{2}$ uniquely identifies one of the declarations named $\id{x}$.
}
The question is: does~$\id{x}$ refer to declaration~$\posid{x}{1}$ or~$\posid{x}{2}$?
Either can be true, depending on the semantics of the programming language.

Scope graphs~\cite{NeronTVW15,RouvoetAPKV20,AntwerpenPRV18,AntwerpenNTVW16,ZwaanA23} offer a uniform model for name resolution that supports sophisticated name binding patterns in programming languages.
As their name suggests, scope graphs model the scoping structure of programs as graphs.
Such graphs let us model both nested and recursive scoping structures,
and name resolution policies as graph search queries.

To illustrate this, consider the program and its scope graph in~\cref{fig:scope-graph-example}.
The nodes in the graph represent scopes: $s_0$ represents the \emph{global scope}, while~$s_{\id{A}}$ and~$s_{\id{B}}$ represent the scopes of modules~$\id{A}$ and~$\id{B}$, respectively.
Scopes $s_{\id{x}1}$, $s_{\id{x}2}$, and $s_{\id{y}}$ represent named declarations.
A scope $s$ may have data~$d$ associated with it, written as $s \mapsto d$, such as the name of the module that they correspond to or the name and type of their declaration.
As we shall see later, associating scopes with names lets us define \emph{name resolution queries} that resolve module names to their corresponding scopes.

Edges between scopes represent \emph{reachability} relations.
Queries can follow these edges to reach other scopes and declarations.
The two module scopes are reachable from the global scope via $\lblMOD$ edges, representing the fact that $\id{A}$ and $\id{B}$ are modules declared in the global scope $s_0$.
The module scopes $s_{\id{A}}$ and $s_{\id{B}}$ are also lexical children of the global scope, so each is connected via a $\lblLEX$ edge to $s_0$.
$\lblVAR$ edges connect scopes to declared names.
Due to the wildcard \MiniMod{import A::*} on line 6 (which imports all members of the module \id{A}) module scope $s_{\id{B}}$ is connected to module scope $s_{\id{A}}$ via an $\lblIMPORT$ edge, making all declarations in module \id{A} reachable from module \id{B}.

\begin{figure}
  \subcaptionbox{
    Program.
    \label{fig:scope-graph-example-a}
  }[0.25\textwidth]{
    \input{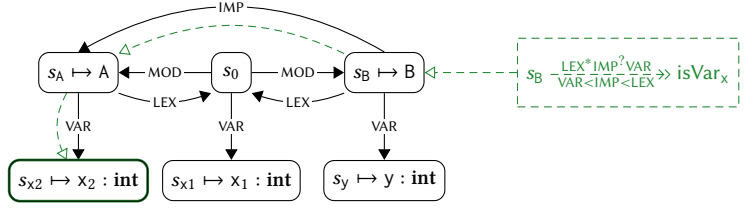}
  }%
  \hfill
  \subcaptionbox{
    Scope graph for the program in~\cref{fig:scope-graph-example-a}.
    \label{fig:scope-graph-example-b}
  }[0.74\textwidth]{
    \centering
\begin{tikzpicture}[
  scopegraph,
  node distance = 2.5em and 3.5em
]
  \node[scope] (s0) {$s_0$};
  \node[scope, below = of s0] (sx1) {$s_{\id{x}1} \mapsto \posid{x}{1} : \tyINT$};
  \draw (s0) edge[lbl=$\lblVAR$] (sx1);

  \node[scope, left = of s0] (sA) {$s_{\id{A}} \mapsto \id{A}$};
  \draw (s0) edge[lbl=$\lblMOD$] (sA);
  \draw (sA.south east) edge[lbl=$\lblLEX$, bend right=20] (s0.south west);
  \node[scope, below = of sA] (sAx2) {$s_{\id{x}2} \mapsto \posid{x}{2} : \tyINT$};
  \draw (sA) edge[lbl=$\lblVAR$] (sAx2);

  \node[scope, right = of s0] (sB) {$s_{\id{B}} \mapsto \id{B}$};
  \draw (s0) edge[lbl=$\lblMOD$] (sB);
  \draw (sB.south west) edge[lbl=$\lblLEX$, bend left=20] (s0.south east);
  \node[scope, below = of sB] (sBy) {$s_{\id{y}} \mapsto \id{y} : \tyINT$};
  \draw (sB) edge[lbl=$\lblVAR$] (sBy);

  \draw (sB.north) edge[bend right, lbl=$\lblIMPORT$] (sA.north);

  \begin{scope}[color=colorblind-bright-3]
    \node[ref, right = of sB] (qx) {
      $\qBase{\lblOrdLexVAR}{
        s_\id{B}
      }{
        \reclos{\lblLEX}\reopt{\lblIMPORT}\lblVAR
      }{
        \hoPred{isVar}{\id{x}}
      }$};
    \draw (qx) edge[ref] (sB);


    
    \draw (sA) edge[ref, bend right] (sAx2);
    \node (sAx2-green) [scope, fit=(sAx2), inner sep=0, line width=1.0pt] {}; 
    \node (sAx2-green) [scope, color=black, fit=(sAx2), inner sep=0] {};      

    \draw (sB) edge[ref, bend right=25] (sA);

  \end{scope}
\end{tikzpicture}
  }%
  \caption{
    An example LM~\cite{NeronTVW15} program and its scope graph, where boxes and arrows represent scopes and reachability relations between scopes, respectively.
    The {\color{colorblind-bright-3}dashed box} represents a query and the {\color{colorblind-bright-3}dashed arrows} its resolution path to $\posid{x}{2}$, the second occurrence of a declaration named $\id{x}$.
  }%
  \label{fig:scope-graph-example}
\end{figure}

\subsection{Scope Graph Queries}%
\label{subsec:scope-graph-queries}

We define name resolution as \emph{queries} in scope graphs.
Resolving a query entails finding all paths from this source scope to matching declarations.
To explain the syntax of a name resolution query, we take the query shown in the {\color{colorblind-bright-3}dashed box} on the right of the graph in \cref{fig:scope-graph-example-b}:

\begin{equation*}
  \qBase{
    \lblOrdLexVAR
  }{
    s_{\id{B}}
  }{
    \reclos{\lblLEX}\reopt{\lblIMPORT}\lblVAR
  }{
    \hoPred{isVar}{\id{x}}
  }
\end{equation*}

\noindent
Here, $s_{\id{B}}$ is the initial scope of the graph search, and $\hoPred{isVar}{\id{x}}$ is a filter that ensures only declarations with name \id{x} are selected.
The regular expression $\reclos{\lblLEX}\reopt{\lblIMPORT}\lblVAR$ is a \emph{reachability policy} declaring which declarations are reachable; \ie, those declarations we can reach by following a sequence of labeled edges that match the regular expression.
The path ordering $\lblOrdLexVAR$ is a \emph{visibility policy} used to disambiguate which reachable names are visible, \ie, to model shadowing.
For example, both $s_{\id{x}1}$ and $s_{\id{x}2}$ are reachable in~\cref{fig:scope-graph-example-b}.
However, the order prefers $\lblIMPORT$ edges over $\lblLEX$ edges, so the only valid path through the graph is the path to $s_{\id{x}2}$.

\subsection{Statix Rules and Constraints}%
\label{subsec:statix-rules}
In classical typing rules, terms are typed relative to one or more \emph{typing contexts}~\cite{Pierce2002}, or typed via \emph{symbol tables}~\cite{au72} or \emph{class tables}~\cite{IgarashiPW01}.
Following existing work~\cite{AntwerpenPRV18,PoulsenRTKV18,RouvoetAPKV20}, we can define typing rules in a similar style, but with terms typed relative to one or more scopes in a scope graph instead.
The constraint language Statix~\cite{AntwerpenPRV18,RouvoetAPKV20,AntwerpenV21} lets us declare such inference rules using a syntax inspired by logic programming.
Type system specifications written in Statix have a declarative interpretation, specifying a class of well-typed programs.
Alternatively, specifications can be used operationally to type check programs by constructing a scope graph and resolving references by traversing it.
This subsection highlights the main features of Statix rules and constraints.
For a more detailed breakdown of the syntax, we refer to the discussion in~\cref{subsec:statix-syntax} and the work of~\citet{RouvoetAPKV20}.

\begin{figure}[t]
  \input{fig/0200-alg-minimod-rules}
  \caption{
    A subset of the typing rules of LM, a toy language from~\cite{NeronTVW15} used for the examples in this paper.
  }%
  \label{fig:alg-minimod-rules}%
  \vspace{-0.5\baselineskip} 
\end{figure}

The rules in~\cref{fig:alg-minimod-rules} show a representative subset of the Statix rules we derived for LM, a toy language from~\cite{NeronTVW15} used throughout this paper.
The figure declares rules for five different typing relations: $\cUser{typeOfExpr}$, $\cUser{memberOk}$, $\cUser{modOk}$, $\cUser{importOk}$, and $\cUser{scopeOfMod}$.
Each rule has a conclusion on the left of an arrow ($\leftarrow$), and a premise given by one or more constraints on the right.
For example, the \rref{eq:t-add} states that the expression $\svar{e_1} + \svar{e_2}$ has type $\tyINT$ in scope $\svar{s}$, if both $\svar{e_1}$ and $\svar{e_2}$ have type $\tyINT$ under the same scope $s$.
\Rref{eq:t-qref}~is a more complex example, where a qualified module access expression $a.x$ has type $T$ when the $a$ resolves to a module
(asserted by the predicate constraint $\cUser{scopeOfMod}(\svar{s}, \svar{a}, \svar{s_m})$),
and $x$ resolves from that module to a declaration of type $T$
(asserted by the query constraint \smash{$\qBase{}{\svar{s_m}}{\lblVAR}{\hoPred{isVar}{\svar{x}}}$}).

The rules in~\cref{fig:alg-minimod-rules} do not mention the underlying scope graph explicitly.
Instead, premises of rules assert requirements on the scope graph structure, such as the existence of scopes with associated data ($\cNew{\svar{s_x}} \mapsto \svar{x} : \svar{T}$) and edges ($\svar{s} \cEdge[\lblVAR] \svar{s_x}$) and the ability to resolve query constraints.
A program is well-typed when a \emph{minimal} scope graph exists that satisfies each such assertion and query.
Minimality implies that the scope graph only has the scopes and edges asserted by the rules of a program: no extraneous edges or scopes.
There exists a solver for Statix constraints that computes this minimal scope graph, which we discuss in the next section.

\subsection{Statix Constraint Solver}%
\label{subsec:statix-constraint-solver}
Following~\citet{RouvoetAPKV20}, the operational semantics of Statix is given by
a constraint solver that soundly constructs and queries scope graphs, and uses unification to solve equality constraints.
The reference synthesis approach we illustrate in \cref{sec:reference-synthesis-by-example} is sound by construction because it builds on this operational semantics.
We defer a deeper discussion of the operational semantics to \cref{sec:operational-semantics}.

The Statix solver will solve as many constraints as possible, yielding either a state with no unsolved constraints (\ie, the program type-checks), a state that derives $\mathsf{false}$ (\ie, the program does not type-check), or a \emph{stuck} state, where the solver does not have enough information to solve the remaining constraints.
There are two reasons why constraints get stuck:
either (1) it is not sufficiently instantiated,
or (2) it is a query constraint which is not yet guaranteed to yield a \emph{stable answer}.
For (1), the solver will only expand a predicate such as $\cUser{typeOfExpr}(x, y, z)$ once $x$,~$y$, and $z$ are sufficiently instantiated such that only a single rule matches.
Similarly, it will only run and solve a query constraint once its \emph{source scope} (\eg, $s$ in \smash{$\qBase{\lblOrdLexVAR}{s}{\reclos{\lblLEX}\reopt{\lblIMPORT}\lblVAR}{\hoPred{isVar}{\svar{x}}}$}) and \emph{data well-formedness predicate} (\eg, $\hoPred{isVar}{\svar{x}}$) are ground.
In case (2), a query gets stuck when it needs to run but another unsolved constraint might add a scope graph edge that could invalidate the query.
The Statix solver implements guards that detect these cases~\citep{RouvoetAPKV20}.

Our $\mathsf{synthesize}$ function runs the Statix solver on a program with holes, where each hole is represented by a free unification variable that maps to a target scope representing the hole's intended target declaration.
The unification variables cause the Statix solver to get stuck on predicate and query constraints directly related to the holes.
Once a stuck state is reached, our reference synthesis approach extends the usual operational semantics of Statix with the ability to use the typing rules of a language to refine the holes of the program, and use the Statix solver to verify the solution.
Once the term of a hole becomes ground and all constraints in the state have been solved, we have successfully synthesized a concrete reference.

We will illustrate how the Statix solver, scope graph, and typing rules are used in our reference synthesis algorithm in \hyperref[sec:reference-synthesis-by-example]{the next section}, and discuss the operational semantics of Statix and our extension in more detail in~\cref{sec:operational-semantics}.

\section{Reference Synthesis by Example}%
\label{sec:reference-synthesis-by-example}

As the name suggests, \emph{reference synthesis} is used to synthesize a concrete (qualified) \emph{reference} to a given declaration.
References in many languages take the shape $x_1.x_2. \cdots .x_n$, modulo syntax.
Here, the first name $x_1$ is resolved from the place in the program where the reference occurs, and subsequent names $x_i$ are resolved relative to wherever the previous qualifiers $x_1. \cdots .x_{i-1}$ led.
The final name $x_n$ leads to the target declaration.

This informal definition of a reference encompasses many syntactic constructs that we intuitively recognize as (qualified) references across languages, for example \Java|Person.this.name| in Java, \lstinline[language=C]|std::option::Option| in Rust, and \lstinline[language=Cobol]|ID| \lstinline[language=Cobol]|IN| \lstinline[language=Cobol]|CUSTOMER| \lstinline[language=Cobol]|IN| \lstinline[language=Cobol]|LAST-TRANSACTION| in Cobol. 
On the other hand, according to our definition, syntax like \Java|List<String>| in Java does not constitute a reference: it is a parameterized type, akin to how a method call \Java|foo(x, y)| would also not be considered a reference.
We give a more precise definition of a reference in~\cref{sec:operational-semantics}.

\[
  \color{gray!75!black}
  \mathcal{P} =
  X\overline{[r]} \xrightarrow{\textsf{lock}}
  X\overline{[\rhole[5]{d}]} \xrightarrow{\textsf{transform}}
  {\color{black}
  X'\overline{[\rhole[5]{d}]} \xrightarrow{\textsf{unlock}}                 
  X'\overline{[r']}
  }
  = \mathcal{P}'
\]%

\noindent
The above diagram reiterates program transformation with locked references.
Initially, we have a program $\mathcal{P}$, which can be represented as a context\footnote{We can regard a `context' as a zipper-like structure over an AST.} $X$ with references $\overline{r}$ occurring in it.
First, we $\mathsf{lock}$ relevant references.
That is, we replace concrete references with locked references that durably remember which declaration they point to, regardless of where the locked reference occurs in the program (\smash{$X\overline{[\rhole[5]{d}]}$}).
Then we $\mathsf{transform}$ the program, possibly moving code around (\smash{$X'\overline{[\rhole[5]{d}]}$}).
Finally, we $\mathsf{unlock}$ the locked references: synthesizing their concrete references ($\overline{r'}$) and plugging them into the program to yield the concrete transformed program (\smash{$\mathcal{P}' = X'\overline{[r']}$}).

In this section, we use a simple program with a locked reference, shown in~\cref{fig:minimod-locked-reference-example}, to illustrate the semantics of our $\mathsf{synthesize}$ function.
The idea is to model locked references as holes given by unification variables, and strategically apply typing rules to infer a substitution for each hole.
The strategy used to apply typing rules must guarantee that inferred substitutions correspond to references that resolve as intended.

\begin{figure}[t]
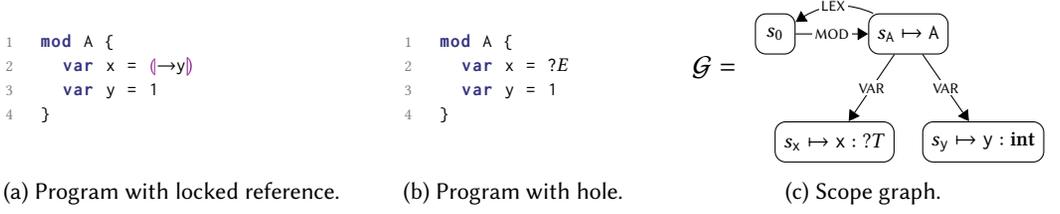

  \centering
  \subcaptionbox{
    Program with locked reference.
    \label{fig:minimod-locked-reference-example-a}
  }[0.33\textwidth]{
    \input{fig/0300-reference-synthesis-example-program-a}
  }%
  \hfill
  \subcaptionbox{
    Program with hole.
    \label{fig:minimod-locked-reference-example-b}
  }[0.22\textwidth]{
    \input{fig/0300-reference-synthesis-example-program-b}
  }%
  \hfill
  \subcaptionbox{
    Scope graph.
    \label{fig:minimod-locked-reference-example-c}
  }{
    \begin{minipage}[r]{0.06\textwidth}
      $\mathcal{G} =$
    \end{minipage}%
    \begin{minipage}[l]{0.28\textwidth}
      \vspace{1em}

\begin{tikzpicture}[scopegraph,node distance=2em and 2.5em]
  \node[scope] (s_0) {$s_0$};
  \node[scope,right=2.75em of s_0] (s_A) {$s_{\id{A}} \mapsto \id{A}$};
  \node[scope,below left=2.5em and -1em of s_A] (s_x) {$s_{\id{x}} \mapsto \id{x} : \UV{T}$};
  \node[scope,below right=2.5em and -1em of s_A] (s_y) {$s_{\id{y}} \mapsto \id{y} : \tyINT$};

  \draw (s_0) edge[lbl=$\lblMOD$] (s_A);
  \draw (s_A) edge[lbl=$\lblVAR$] (s_x);
  \draw (s_A) edge[lbl=$\lblVAR$] (s_y);
  \draw (s_A) edge[lbl=$\lblLEX$, bend right] (s_0);
\end{tikzpicture}%
    \end{minipage}%
  }
  \caption{Small example program and its scope graph\hspace{0.02\textwidth}}%
  \label{fig:minimod-locked-reference-example}
\end{figure}

\subsection{Initial Constraint Solving}
Consider the example in~\cref{fig:minimod-locked-reference-example-a}, where for the locked reference $\rhole[4]{\id{y}}$ we want to synthesize a concrete reference that must resolve to variable $\id{y}$'s declaration scope in the underlying scope graph.
Valid concrete references that our approach could yield include $\id{y}$ and $\id{A.y}$.

The first step of our approach is to replace each locked reference in the program by a hole, represented by a fresh unification variable, shown in~\cref{fig:minimod-locked-reference-example-b}.
Then we use the original language's static semantic rules and run the Statix solver on the input program.
The presence of holes causes the solver to yield a stuck state, where the solver neither has enough information to solve all constraints nor can derive $\mathsf{false}$, due to the free unification variables.

In our example, the Statix solver will recursively expand predicates and solve scope graph constraints to yield a solver state, shown below, with the inferred scope graph $\mathcal{G}$ (see~\cref{fig:minimod-locked-reference-example-c}) and a single constraint that is stuck because Statix cannot infer which rule to apply to expand the predicate.
Additionally, in the state we record that the unification variable $\UV{E}$ is associated with hole~$h$, and that hole~$h$ should become a concrete reference that resolves to the scope $s_{\id{y}}$, which is the scope associated with the locked reference's target declaration $\id{y}$ in $\mathcal{G}$.

\begin{figure}[H]
  \begin{HugeAngles}
    \begin{minipage}[c]{0.03\textwidth}
      \hyperref[fig:minimod-locked-reference-example-c]{$\mathcal{G}$}
    \end{minipage}
    \vline
    \begin{minipage}[c]{0.49\textwidth}
      \begin{gather*}
        \cUser{typeOfExpr}(s_{\id{A}}, \hl{$\UV{E}$}, \UV{T})
      \end{gather*}
    \end{minipage}
    \vline
    \begin{minipage}[c]{0.12\textwidth}
      \begin{gather*}
        \ \hl{$\UV{E} \mapsto h$}
      \end{gather*}
    \end{minipage}
    \vline
    \begin{minipage}[c]{0.26\textwidth}
      \begin{gather*}
        \hl{$ h \mapsto (s_{\id{y}}, \UV{E})$}
      \end{gather*}
    \end{minipage}
  \end{HugeAngles}
\end{figure}

\noindent
The state has the form \smash{$\opsRSState{\mathcal{G}}{\overline{C}}{U}{H}$}.
Following \citet{RouvoetAPKV20}, the solver state is given by the (partially constructed) scope graph $\mathcal{G}$ and the set of yet unsolved constraints $\overline{C}$.
To support reference synthesis, we have augmented the solver state with $U$ and $H$.
$U$ is a partial function that maps free unification variables to hole identifiers where applicable.
$H$ maps each hole identifier in a program to a pair $(s_t, t)$ of the current target scope $s_t$ and the term $t$ synthesized for the hole so far.
Note that $s_t$ changes as we synthesize qualifiers for the concrete reference.

\subsection{Forking States}
By default, the Statix solver only expands a constraint when there is exactly one possible expansion, and otherwise the constraint gets stuck.
To support reference synthesis we augment the solver to allow solver states to be \emph{forked}.
This way we can obtain the solver state for each possible expansion of a constraint.
This way we can speculatively apply each possible expansion of a constraint, obtaining a solver state.
Forked solver states that fail are discarded, but those that can be successfully solved represent programs for which we have synthesized a valid reference.

For the stuck state discussed above, we can speculatively expand the stuck $\cUser{typeOfExpr}$ predicate constraint and fork the state for each of the $\cUser{typeOfExpr}$ rules shown in~\cref{fig:alg-minimod-rules}, yielding different states.
However, some of those rules will not lead to well-formed references, and therefore we only expand to rules that could yield a reference.
For the simple LM language shown in \cref{fig:alg-minimod-rules}, the relevant rules are \ref{eq:t-var} and \ref{eq:t-qref}.
We fork the solver state, and discuss how each of these two rules leads to a synthesized reference.

\subsection{Expanding Query Constraints}
Applying the \rref{eq:t-var} and solving the constraints as far as possible yields the first forked solver state with one stuck query constraint:

\begin{figure}[H]
  \begin{HugeAngles}
    \begin{minipage}[c]{0.03\textwidth}
      \hyperref[fig:minimod-locked-reference-example-c]{$\mathcal{G}$}
    \end{minipage}
    \vline
    \begin{minipage}[c]{0.49\textwidth}
      \begin{gather*}
        \hl{$
          \vphantom{()}  
          \cQuery[\lblOrdLexVAR]{\svar{s}_{\id{A}}}{
            \reclos{\lblLEX}\reopt{\lblIMPORT}\lblVAR
          }{
            \hoPred{isVar}{\UV{x}}
          }{
            \set{\tuple{\_, \UV{x} : \UV{T}}}
          }
      $}
      \end{gather*}
    \end{minipage}
    \vline
    \begin{minipage}[c]{0.12\textwidth}
      \begin{gather*}
        \ \UV{x} \mapsto h
      \end{gather*}
    \end{minipage}
    \vline
    \begin{minipage}[c]{0.26\textwidth}
      \begin{gather*}
        \ h \mapsto \big( s_{\id{y}}, \hl{$\UV{x}$} \big)
      \end{gather*}
    \end{minipage}
  \end{HugeAngles}
\end{figure}

\noindent
As the free variable $\UV{x}$ in the stuck constraint is related to hole $h$, we can infer that the query is also related to hole $h$: attempting to solve the constraint could be fruitful for synthesizing a reference to the target scope $s_{\id{y}}$.
Therefore, reference synthesis searches for valid scope graph paths from $s_{\id{A}}$ to the intended target the scope $s_{\id{y}}$, while respecting the reachability regex and visibility ordering of the query.
There is exactly one such path, namely the one-step path traversing the $\lblVAR$ edge from $s_{\id{A}}$ to $s_{\id{y}}$.
This implies that $\UV{x} = \id{y}$ and $\UV{T} = \tyINT$.
This solves all constraints and results in the mapping $h \mapsto (s_{\id{y}}, \id{y})$.
As all constraints have been solved and the term for the hole is ground, the unqualified reference $\id{y}$ is returned as a solution.

\subsection{Qualified Reference}
In the second forked solver state, applying the \rref{eq:t-qref} and solving constraints yields the following solver state with a stuck predicate constraint and a stuck query constraint, where the term for the hole is representing a possible \emph{qualified} reference $\UV{a}\id{.}\UV{x}$:

\begin{figure}[H]
  \begin{HugeAngles}
    \begin{minipage}[c]{0.03\textwidth}
      \hyperref[fig:minimod-locked-reference-example-c]{$\mathcal{G}$}
    \end{minipage}
    \vline
    \begin{minipage}[c]{0.49\textwidth}
      \begin{gather*}
        \hl{$
          \cUser{scopeOfMod}(\svar{s}_{\id{A}}, \UV{a}, \UV{s_m})
        $}
        \\
        \hl{$
          \vphantom{()}  
          \cQuery{\UV{s_m}}{
            \lblVAR
          }{
            \hoPred{isVar}{\UV{x}}
          }{
            \set{\tuple{\_, \UV{x} : \UV{T}}}
          }
      $}
      \end{gather*}
    \end{minipage}
    \vline
    \begin{minipage}[c]{0.12\textwidth}
      \begin{gather*}
        \ \UV{a} \mapsto h
        \\
        \ \UV{x} \mapsto h
      \end{gather*}
    \end{minipage}
    \vline
    \begin{minipage}[c]{0.26\textwidth}
      \begin{gather*}
        \ h \mapsto \big( s_{\id{y}}, \hl{$\UV{a}\id{.}\UV{x}$} \big)
      \end{gather*}
    \end{minipage}
  \end{HugeAngles}
\end{figure}

\noindent
Next, we expand the $\cUser{scopeOfMod}$ predicate.
One of the possible expansions (using~\rref{eq:s-mod}) yields the following state:

\begin{figure}[H]
  \begin{HugeAngles}
    \begin{minipage}[c]{0.03\textwidth}
      \hyperref[fig:minimod-locked-reference-example-c]{$\mathcal{G}$}
    \end{minipage}
    \vline
    \begin{minipage}[c]{0.465\textwidth}
      \begin{gather}
        \hl{$
          \vphantom{()}  
          \cQuery[\lblOrdLexMOD]{\svar{s}_{\id{A}}}{
            \reclos{\lblLEX}\lblMOD
          }{
            \hoPred{isMod}{\UV{a}}
          }{
            \set{\tuple{\UV{p}, \UV{a}}}
          }
        $}
        \\ 
        \hl{$
          \UV{s_m} \cEq \mathsf{tgt}(\UV{p})
        $}
        \\
        \cQuery{\UV{s_m}}{
          \lblVAR
        }{
          \hoPred{isVar}{\UV{x}}
        }{
          \set{\tuple{\_, \UV{x} : \UV{T}}}
        }
      \end{gather}
    \end{minipage}%
    \quad  
    \vline
    \begin{minipage}[c]{0.12\textwidth}
      \begin{gather*}
        \ \UV{a} \mapsto h
        \\
        \ \UV{x} \mapsto h
      \end{gather*}
    \end{minipage}
    \vline
    \begin{minipage}[c]{0.26\textwidth}
      \begin{gather*}
        h \mapsto \big( s_{\id{y}}, \UV{a}\id{.}\UV{x} \big)
      \end{gather*}
    \end{minipage}
  \end{HugeAngles}
\end{figure}

\noindent
The solver state has two stuck query constraints: the module query (1) and the variable query~(3).
Both query constraints have unification variables that relate to the hole $h$, so we cannot know which of these query should resolve to the target scope $s_{\id{y}}$.
Therefore, we fork the solver state again: one branch where we attempt to expand query (1) and one where we attempt to expand query (3).
Only the fork that attempts to expand the variable query (3) to the target $s_{\id{y}}$ will succeed,
so in this example we continue with that branch.

Because of the mapping~$\UV{x} \mapsto h$, query (3) may be relevant for resolving to the target scope~$s_{\id{y}}$.
Thus, we inspect the scope graph and search for well-formed paths to~$s_{\id{y}}$.
Since the query in question has the unification variable~$\UV{s_m}$ as its source, we need to look for paths with \emph{any} possible source.
For the graph~\cref{fig:minimod-locked-reference-example-c} in the solver state, only the one-step path from $s_{\id{A}}$ to $s_{\id{y}}$ matches the regular expression of the query.
Hence, the query is resolved by substituting $s_{\id{A}}$ for the scope variable $\UV{s_m}$.
Next, we make $s_{\id{A}}$ the new target for the hole $h$, since we assume that the source scope was not ground because the query forms part of a qualified reference; \ie, a sequence of paths.
Eventually, we solve the constraint by substituting $\UV{s_m} \mapsto s_{\id{A}}$ and $\UV{x} \mapsto \id{y}$.

\begin{figure}[H]
  \begin{HugeAngles}
    \begin{minipage}[c]{0.03\textwidth}
      \hyperref[fig:minimod-locked-reference-example-c]{$\mathcal{G}$}
    \end{minipage}
    \vline
    \begin{minipage}[c]{0.49\textwidth}
      \begin{gather*}
        \cQuery[\lblOrdLexMOD]{\svar{s}_{\id{A}}}{
          \reclos{\lblLEX}\lblMOD
        }{
          \hoPred{isMod}{\UV{a}}
        }{
          \set{\tuple{\UV{p}, \UV{a}}}
        }
        \\ 
        \hl{$s_{\id{A}}$} \cEq \mathsf{tgt}(\UV{p})
      \end{gather*}
    \end{minipage}
    \vline
    \begin{minipage}[c]{0.12\textwidth}
      \begin{gather*}
        \ \UV{a} \mapsto h
      \end{gather*}
    \end{minipage}
    \vline
    \begin{minipage}[c]{0.26\textwidth}
      \begin{gather*}
        \ h \mapsto \big( \hl{$s_{\id{A}}$} \cdot s_{\id{y}}, \UV{a}\id{.}\mkern-4mu\hl{$\mkern-4mu\id{y}\mkern-4mu$} \big)
      \end{gather*}
    \end{minipage}
  \end{HugeAngles}
\end{figure}

\noindent
In addition to refining the hole term to $\UV{a}\id{.}\id{y}$, the target scope was also refined to $s_{\id{A}}$.
The new problem to be solved is to find the qualifier that resolves to $s_{\id{A}}$.
Using the same principles as illustrated above, the remaining stuck query can be expanded.
This will solve the remaining constraints and make the hole term ground, yielding $\id{A.y}$ as the solution.

\begin{figure}[H]
  \begin{HugeAngles}
    \begin{minipage}[c]{0.03\textwidth}
      \hyperref[fig:minimod-locked-reference-example-c]{$\mathcal{G}$}
    \end{minipage}
    \vline
    \begin{minipage}[c]{0.49\textwidth}
      \centering
      $\emptyset$
    \end{minipage}
    \vline
    \begin{minipage}[c]{0.12\textwidth}
      \centering
      $\emptyset$
    \end{minipage}
    \vline
    \begin{minipage}[c]{0.26\textwidth}
      \begin{gather*}
        \ h \mapsto \big( \hl{$s_{\id{A}}$} \cdot s_{\id{A}} \cdot s_{\id{y}}, \hl{$\mkern-4mu\id{A}\mkern-4mu$}\mkern-4mu\id{.}\id{y} \big)
      \end{gather*}
    \end{minipage}
  \end{HugeAngles}
\end{figure}

\noindent
Following to our definition of a reference, the solution $\id{A.y}$ is a composite path in the scope graph from the source scope $s_{\id{A}}$,
{\color{colorblind-bright-3}$s_{\id{A}}$} \tikz[scopegraph]{\draw (0,0) edge[ref, dashdotted, color=colorblind-bright-3] (0.5,0);} {\color{colorblind-bright-3}$s_0$} \tikz[scopegraph]{\draw (0,0) edge[ref, dashdotted, color=colorblind-bright-3] (0.5,0);} {\color{colorblind-bright-3}$s_{\id{A}}$} to the scope of qualifier $\id{A}$,
followed by {\color{colorblind-bright-1}$s_{\id{A}}$} \tikz[scopegraph]{\draw (0,0) edge[ref, color=colorblind-bright-1] (0.5,0);} {\color{colorblind-bright-1}$s_{\id{y}}$} to the scope of the intended target declaration $\id{y}$, as shown here:

\begin{figure}[H]
  \vspace{-0.5\baselineskip} 
  \begin{tikzpicture}[scopegraph,node distance=2em and 2em]
    \node[scope] (s_0) {$s_0$};
    \node[scope,right= 2.5em of s_0] (s_A) {$s_{\id{A}} \mapsto \id{A}$};
    \node[scope,above right= -0.5em and 2.5em of s_A] (s_x) {$s_{\id{x}} \mapsto \id{x} : \UV{T}$};
    \node[scope,below right= -0.5em and 2.5em of s_A] (s_y) {$s_{\id{y}} \mapsto \id{y} : \tyINT$};

    \draw (s_0) edge[lbl=$\lblMOD$] (s_A);
    \draw (s_A) edge[lbl=$\lblVAR$] (s_x.west);
    \draw (s_A) edge[lbl=$\lblVAR$] (s_y.west);
    \draw (s_A) edge[lbl=$\lblLEX$, bend right] (s_0);

    \draw (s_A) edge[ref, dashdotted, bend right=50,color=colorblind-bright-3] (s_0);
    \draw (s_0) edge[ref, dashdotted, bend right,color=colorblind-bright-3] (s_A);
    \draw (s_A) edge[ref, bend right,color=colorblind-bright-1] (s_y);
  \end{tikzpicture}
  \vspace{-0.5\baselineskip} 
\end{figure}

\noindent
The \hyperref[sec:operational-semantics]{next section} formally defines the approach illustrated above.
In~\cref{sec:heuristics} we describe the heuristics we apply to make the approach usable in practice.
We evaluate our $\mathsf{synthesize}$ function on test programs with locked references in~\cref{sec:evaluation}.

\section{Operational Semantics}%
\label{sec:operational-semantics}

The previous section illustrated our approach to synthesize concrete references using an extension of the Statix solver.
This section presents an operational semantics that defines that extension.

\subsection{Syntax of Statix}%
\label{subsec:statix-syntax}

The syntax of Statix terms and constraints, defined in~\cref{fig:statix-syntax}, follows \citet{RouvoetAPKV20} and has a separation-logic-inspired flavor,
as the declarative semantics of Statix constraints is defined using separation logic.
We refer to \citeauthor{RouvoetAPKV20} for the details of this declarative semantics, and focus on the operational semantics instead in~\cref{subsec:statix-operational-semantics} and ~\cref{subsec:refsyn-operational-semantics}.

The syntax uses these distinct enumerable sets of symbols:
$\mathit{TermConstructor}$ for term constructor symbols,
$\mathit{Var}$ for term variables,
$\mathit{SetVar}$ for set variables,
$\mathit{PredSymbol}$ for names of predicates (such as $\mathsf{typeOfExpr}$ in~\cref{fig:alg-minimod-rules}),
$\mathit{Scope}$ for scope graph node identifiers,
$\mathit{Label}$ for scope graph edge labels,
$\mathit{RegEx}$ is the set of regular expressions over words comprised of label symbols, and
$\mathit{PartialOrd}$ is the set of partial orders on label symbols.

In the $\mathit{Constraint}$ syntax, $\cEmp{}$ is the trivially satisfiable constraint, akin to a $\cTrue$ constraint in a traditional logic.
Conversely, $\cFalse$ is never satisfiable.
$C_1 \cConj{} C_2$ is a \emph{separating conjunction}, where the declarative and operational semantics of Statix guarantees $C_1$ and $C_2$ construct separate scope graph fragments.
However, the reader can approximately think of $C_1 \cConj{} C_2$ as traditional logic conjunction.
The constraint $\cExists{x} C$ asserts the existence of some term named $x$, which may be referenced and constrained by $C$.
$\mathsf{single}(t, \underline{t})$ asserts that set term $\underline{t}$ is a singleton set whose element is equal to $t$, while $\cForall{x}{\underline{t}}{C}$ asserts that $C$ holds for all its elements $x$.
$\cNew t_1 \mapsto t_2$ asserts that the scope graph contains a scope identified by $t_1$ and with associated data $t_2$, whereas $t_1 \cEdge[l] t_2$ asserts that the scope identified by $t_1$ is connected via an $l$-labeled edge to the scope identified by $t_2$.

\begin{figure}[t]
  \input{fig/0400-statix-syntax}
  \caption{
    Syntax of Statix terms and constraints.
  }%
  \label{fig:statix-syntax}
\end{figure}

The syntax of queries is $\cQuery[\orderSyntax]{t}{r}{\lambda x.\: E}{z}.\: C$.
Here, the term $t$ represents a source scope term; $r$ represents a regular expression that determines reachability; $o$ is a partial order that determines visibility; $\lambda x.\: E$ is a \emph{data-well-formedness constraint} which characterizes whether a target scope and its associated data matches the query; $z$ is a set variable that will be bound in $C$ to the result of the query.
The syntax used by~\citeauthor{RouvoetAPKV20} provides a separate constraint for applying the visibility ordering $o$.
We include this ordering as a part of the query, following the syntax used by the implementation of Statix found in the Spoofax Language Workbench~\cite{KatsV10a}.\footnote{\url{https://spoofax.dev/}}

Also in contrast to~\citeauthor{RouvoetAPKV20}, we distinguish equality constraints (ranged over by $E$) from plain constraints ($C$).
This way, data well-formedness predicates of queries ($\lambda x.\: E$) use constraints that can only inspect terms and data, but cannot extend the scope graph.
Another difference from~\citeauthor{RouvoetAPKV20} is that we define the semantics of predicate constraints.
The constraint $P(t^*)$ represents an invocation of a user-specified predicate, such as those from \cref{fig:alg-minimod-rules}.
The rules we discuss next are parameterized by a specification $\mathbb{S}$ comprising rules of the form $P(t^*)$.

In the syntax of constraints and terms in \cref{fig:statix-syntax} and throughout the paper, we use $\overline{\plhdr}$ notation to represent (possibly empty) sequences, and $\plhdr^*$ notation to represent their syntax.
For example, $P(t^*)$ represents the syntax of a predicate symbol followed by a parenthesized sequence of terms.
We use $x;y$ for sequences that can be freely reordered (\eg, $x;y \approx y;x$) and $x \cdot y$ for sequences that cannot (\eg, $x \cdot y \not\approx y \cdot x$).
We overload notation and use $x;y$ and $x \cdot y$ to represent sequences both when $x$ is an element and when $x$ is a sequence, and similarly for $y$.

\subsection{Operational Semantics of Statix with Hole State Tracking}%
\label{subsec:statix-operational-semantics}

\begin{figure}[t]
  \input{fig/0400-statix-config-syntax}
  \caption{
    Syntax of Statix configurations and scope graphs.
  }%
  \label{fig:statix-config-syntax}
\end{figure}

The operational semantics in~\cref{fig:statix-operational-semantics} also follows~ \citet{RouvoetAPKV20}.
The transition relation uses the configurations whose syntax is given in \cref{fig:statix-config-syntax}.
A configuration $\opsRSState{\mathcal{G}}{\overline{C}}{U}{H}$ comprises:

\begin{itemize} 
  \item $\mathcal{G}$: the currently constructed scope graph.
  \item $\overline{C}$: the current set of constraints.
  \item $U \in (\mathit{Var} \rightharpoonup \mathit{Hole})$: associates unification variables with holes.
  This lets us determine to which hole a given constraint might relate.
  \item $H \in (\mathit{Hole} \rightharpoonup (s^* \times t))$: maps each hole in the program to its state, consisting of a list of traversed scopes $s^*$ and the term $t$ constructed so far.
\end{itemize}

\noindent
A main difference from~\citeauthor{RouvoetAPKV20} is that we extended the configuration to track the state of reference synthesis \emph{holes} via the entities $U$ and $H$.
While the rules in~\cref{fig:statix-operational-semantics} never access them, $U$ and $H$ are explicitly propagated by these rules such that substitutions resulting from unification get applied to them.
We return to the role of $U$ and $H$ in \cref{subsec:refsyn-operational-semantics}.

\begin{figure}[p]
  \centering
  \subcaptionbox*{
  }[\textwidth]{
    \input{fig/0400-statix-operational-semantics}
  }
  \subcaptionbox*{
  }[\textwidth]{
    \input{fig/0400-statix-operational-semantics-eqconstraints}
  }%
  \vspace{-1.5\baselineskip}
  \caption{Operational semantics of constraints in Statix}%
  \label{fig:statix-operational-semantics}
\end{figure}

\paragraph{Predicate Constraints}
Rule \textsc{Op-Pred} defines the semantics of \emph{predicate expansion}.
To support this rule, the transition judgment in \cref{fig:statix-operational-semantics} is parameterized by a specification $\mathbb{S}$.
This specification is given by a set of predicate rules where each rule has the shape $P(\overline{t}) \rTurnstile{} C$.
Each rule is closed (\ie, $FV(P(\overline{t}) \rTurnstile{} C) = \emptyset$), and we assume that the domain of every predicate rule is disjoint from all other rules; \ie:
\[
    \forall P\ \overline{t_1}\ \overline{t_2}\ C_1\ C_2.
    \ \left( P(\overline{t_1}) \rTurnstile{} C_1 \right) \in \mathbb{S} \wedge
     \left( P(\overline{t_2}) \rTurnstile{} C_2 \right) \in \mathbb{S} \wedge
     (\exists \theta.\: \mathsf{mgu}(\overline{t_1}, \overline{t_2}) = \theta) \Rightarrow
     \overline{t_1} \equiv \overline{t_2} \wedge C_1 \equiv C_2
\]%

\noindent
We also assume that premises that access rules in a specification $\mathbb{S}$ (\eg, $(P(\overline{t}) \rTurnstile{} C) \in \mathbb{S}$) are automatically $\alpha$-renamed to avoid variable capture.
Rule \textsc{Op-Pred} thus expands a predicate only when there exists a \emph{unique} rule $P(\overline{t_2}) \rTurnstile C$ in specification $\mathbb{S}$ whose head matches the predicate constraint $P(\overline{t_1})$.
In case the predicate constraint matches multiple rules in $\mathbb{S}$, rule \textsc{Op-Pred} does not apply.
For example, if $P$ is a predicate symbol, $f$ and $g$ are constructors, $x$, $y$ are variables, and we have a specification with rules $P(f(x)) \rTurnstile C_1$ and $P(g(x)) \rTurnstile C_2$,
then the \textsc{Op-Pred} rule does not apply to the predicate constraint $P(y)$ because $y$ can be instantiated to both $f(x)$ and $g(x)$.

The substitution yielded by \textsf{mgu} in the premise of \textsc{Op-Pred} is applied to the entire configuration after unfolding a predicate.
Here and in the rest of the paper, $\mathsf{mgu}$ is the partial function computing the most general unifier (\ie, a substitution).
We use $\bot$ to denote failure in partial functions.
We use $\theta,\theta_1,\theta_{result},\ldots$ to range over substitutions of variables by terms ($\mathit{Var} \rightharpoonup \mathit{Term}$) or set variables by set terms ($\mathit{SetVar} \rightharpoonup \mathit{SetTerm}$).
The type of substitution will be clear from the context.
The substitution functions for constraints, terms, and scope graphs are standard and elided for brevity, except for the reference entity $U$ which we describe in~\cref{subsec:refsyn-operational-semantics}.

\paragraph{Logic Constraints}
The other rules in \cref{fig:statix-operational-semantics} are directly adapted from \citet{RouvoetAPKV20}.
Rule \textsc{Op-Conj} splits a separating conjunction constraint into two constraints.
\textsc{Op-Emp} dispatches the vacuously satisfiable constraint $\cEmp{}$.
\textsc{Op-Exists} unpacks an existentially quantified constraint by choosing a fresh variable name, which may get unified using, for example, \textsc{Op-Eq}.

\paragraph{Set Constraints}
Queries yield \emph{sets} of results, so the rules in \cref{fig:statix-operational-semantics} include two dedicated constraints for matching on sets.
The semantics of $\cSingle{t_1}{\underline{t_2}}$ is given in rule \textsc{Op-Singleton} which asserts that $t_2$ must be a singleton set $\{ t' \}$, such that $t_1$ unifies with $t'$.
The semantics of constraint $\cForall{x}{\underline{t}}{C}$ is given in rule \textsc{Op-Forall}.
The rule asserts that $\underline{t}$ must be some set literal $\zeta$; \ie, a union of singleton sets.
The rule expands $\cForall{x}{\zeta}{C}$ into as many constraints as $\zeta$ has singleton sets, in each case substituting $x$ for the singleton set inhabitant.

\paragraph{Scope Graph Constraints}
The rules \textsc{Op-New-Scope} and \textsc{Op-New-Edge} create new scopes and edges in the scope graph, respectively.
Rule \textsc{Op-Data} asserts that a constraint $\cDataOf{t_1}{t_2}$ can be solved when $t_1$ is a scope $s$, and $t_2$ unifies with the term associated with scope $s$.
Rule \textsc{Op-Query} has two premises.
The function $\mathsf{Ans}$ returns the set of all paths that match the query parameters (see~\cref{subsec:scope-graphs}).
The $\mathsf{guard}$ predicate ensures that the query is \emph{guarded} in the sense that the constraints $C; \overline{C}$ do not add a new edge to the scope graph that would cause the query to yield a different answer.
Both $\mathsf{Ans}$ and $\mathsf{guard}$ are discussed in detail by \citet[\S3.1 and \S{}5.3]{RouvoetAPKV20}.

\vspace{1em}
\subsection{Operational Semantics of Reference Synthesis}%
\label{subsec:refsyn-operational-semantics}

The usual operational semantics of Statix in \cref{fig:statix-operational-semantics} is conservative about solving query and predicate constraints.
As discussed before, \textsc{Op-Pred} solves a predicate constraint only when there is exactly one possible expansion, otherwise it is \emph{stuck}.
Similarly, \textsc{Op-Query} only solves a query constraint when the source scope of the query is ground (\ie, it is a scope rather than a variable), and the $\mathsf{guard}$ premise holds.

As we can observe from the example discussed in~\cref{sec:reference-synthesis-by-example}, speculatively solving predicate and query constraints allows the Statix solver to \emph{infer} what syntax of valid references to substitute for each hole of a program.
To admit such inference, the rules in~\cref{fig:refsyn-operational-semantics} let us \emph{speculatively} expand \emph{predicate} and \emph{query constraints} in states that would otherwise be stuck.
We achieve this by introducing two new relations: $\rightarrowtail$ performs a speculative expansion step, and $\twoheadrightarrowtail$ relates sets of potentially speculatively expanded configurations.

\paragraph{The $\twoheadrightarrowtail$ Relation}
As long as the regular Statix constraint solving rules can make progress, the \textsc{Op-Solve} rule applies.
Once the solver gets stuck, \textsc{Op-Expand} applies.
The set comprehension in the bottom premise of the rule lets us speculatively solve stuck query constraints and expand predicates.
Configurations for which neither \textsc{Op-Solve} nor \textsc{Op-Expand} apply are truly stuck, and will be pruned by the set comprehension premise of \textsc{Op-Expand}.

\paragraph{Speculative Predicate Expansion}
The rule \textsc{Op-Expand-Pred} augments the plain Statix constraint solving rules from \cref{fig:statix-operational-semantics} to support selecting an \emph{arbitrary} rule for expanding a predicate constraint.

\paragraph{Speculative Query Expansion}
The rule \textsc{Op-Expand-Query} augments the plain Statix constraint solving rules from \cref{fig:statix-operational-semantics} with support for solving a stuck query constraint,
by synthesizing a path from a (possibly unknown) source scope to the current target scope of the related hole.
The rule assumes that we are resolving a reference given by a composite path, and attempts to ``prepend'' a step to the composite path.
Intuitively, if we think of composite paths as (qualified) references, this corresponds to attempting to prepend a qualifier.

\textsc{Op-Expand-Query} uses the $U$ and $H$ components of the state, to determine which hole it is expanding.
For each free variable in the program $U$ tracks to which hole it is related.
For this reason, its substitution function $U[t/x]$ has a guard that checks that each free variable in $t$ are either not related to a hole, or are related to the same hole as $x$.
As shown previously in~\cref{fig:statix-config-syntax}, the state of a hole $H(h)$ is given by a pair $(\overline{s}, t)$.
Here, $t$ is a term representing the inferred syntax for the reference, while $\overline{s}$ represents a non-empty sequence of \emph{query-connected scopes} that form its composite path.

\begin{definition}[Query-Connected Scopes]
  For a given scope graph $\mathcal{G}$, two scopes $s_1$ and $s_2$ in $\mathcal{G}$ are \emph{query-connected} by a query \smash{$q = \qBase{\orderSyntax}{s_1}{r}{\lambda x.\: E}$}, which we denote \smash{$s_1 \xxrightarrow{q} s_2$}, when there exists an $p$  such that \smash{$p \in \mathsf{Ans}(\mathcal{G}, \qBase{o}{s_1}{r}{\lambda x.\: E})$} where either $s_2 = \mathsf{tgt}(p)$ or $s_2 \in \rho_{\mathcal{G}}(\mathsf{tgt}(p))$.
  \label{def:query-connected-scopes}
\end{definition}

This uses the notation $s_2 \in \rho_{\mathcal{G}}(\mathsf{tgt}(p))$ to mean that $s_2$ is a syntactic sub-term of the data associated with $\mathsf{tgt}(p)$.
Using the definition we can define composite paths, which model references.

\begin{definition}[Composite Path]
  \label{def:composite-path}
  For a given scope graph $\mathcal{G}$, a sequence of scopes $s_0\ldots s_n$, and a sequence of queries $q_i \in \overline{q}$, a composite path in the scope graph is given by a series of query-connected scopes; \ie:
  \smash{$s_0 \xxrightarrow{q_1} s_1 \cdots \xxrightarrow{q_n} s_n$}
\end{definition}

The first premise of rule \textsc{Op-Expand-Query} asserts that the data well-formedness predicate ($\lambda y. \mkern4mu E$) has a variable occurring in it which is relevant for a hole $h$.
The head scope $s_t$ of the composite path component in the state of $h$ represents the scope that we need to connect to, in order to prepend a step to a query-connected path.
The remaining premises assert that we choose a source scope $s'$ and a target scope $s''$ that is connected to $s_t$, such the query resolves from~$s'$ to~$s''$.

\paragraph{Accepting States}
The $\mathsf{Accept}(\overline{C}, H)$ premise of \textsc{Op-Expand} holds iff (1)~$\overline{C}$ is empty (all constraints are solved), and (2) the state of each hole in $H$ has a composite path component of length $> 1$ (we have constructed a composite path for each hole).

\begin{figure}[t]
  \input{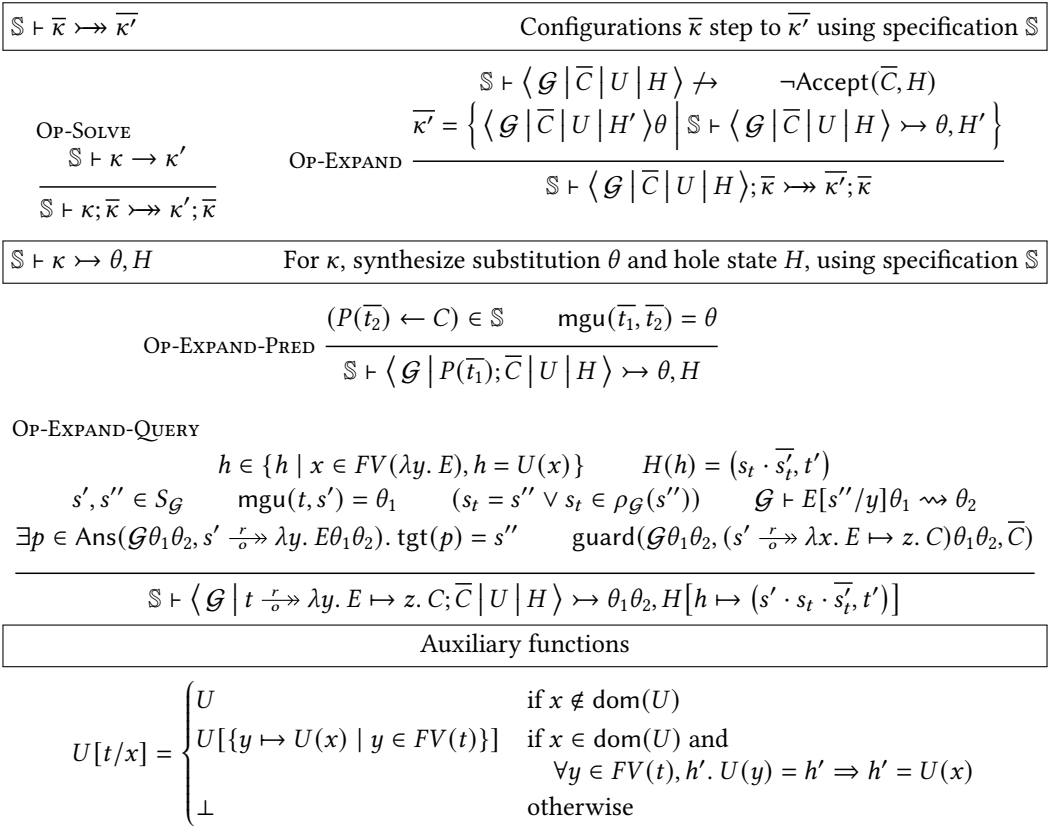}
  \captionof{figure}{
    Operational semantics of reference synthesis.
  }%
  \label{fig:refsyn-operational-semantics}
\end{figure}

\subsection{Building the Synthesize Function}%
\label{subsec:building-the-synthesizer}

Now, we have all the pieces to build the $\mathsf{synthesize}$ function:
\[
  \inferrule{
    D = \set{ x \mapsto d \mid \rhole[4]{d} \in \mathcal{P}_{\text{locked}}, x \text{ fresh} }
    \\
    U = \set{ x \mapsto h \mid x \in \mathsf{dom}(D), h \text{ fresh} }
    \\
    \theta_1 = \set{ \rhole[4]{d} \mapsto x \mid D(x) = d }
    \\
    \mathcal{P}_0 = \mathcal{P}_{\text{locked}}\theta_1
    \\
    \opsRSState
      {\emptyset}
      {P_0(\mathcal{P}_0)}
      {U}
      {\emptyset}
    \rightarrow^\ast
    \opsRSState
      {\SG_0}
      {\overline{C}}
      {U}
      {\emptyset}
    \\
    H = \set{ h \mapsto (s_d, x) \mid U(x) = h, D(x) \in \rho_{\mathcal{G}_0}(s_d) }
    \\
    \opsRSState
      {\SG_0}
      {\overline{C}}
      {U}
      {H}
    \twoheadrightarrowtail^\ast
    \overline{\kappa}
    \\
    \opsRSState
      {\SG}
      {\overline{C}'}
      {U'}
      {H'}
    \in
    \overline{\kappa}
    \\
    \mathsf{Accept} \big(\overline{C}', H' \big)
    \\
    \theta_{result} = \set{ x \mapsto t \mid H'(U(x) ) = (s_t, t) }
  }{
    \mathsf{synthesize}(\mathcal{P}_{\text{locked}}) = \mathcal{P}_0\theta_{result}
  }
\]

\noindent
We first create a fresh unification variable and hole for each locked reference in the program ($D$ and $U$, respectively),
and replace each locked reference by a fresh unification variable in the program ($\theta_1$).
Then, we solve the initial constraint ($P_0)$ on the program with holes.
From the partial scope graph in the result state, we initialize the hole state $H$ for each hole.
Then, we synthesize the references, and extract an accepted state.
From this state, we build a substitution $\theta_{result}$ that substitutes each hole variable with the synthesized reference term.

\pagebreak[4]  

\subsection{Properties}%
\label{subsec:operational-semantics-discussion}

In this section, we discuss some properties of our operational semantics: soundness, completeness, and confluence; and we discuss liveness.

\paragraph{Soundness}
First, we consider the \emph{soundness} of our approach.
Soundness consists of two components:
(1) the resulting solutions are well-typed; and
(2) each solution corresponds to a composite path in the scope graph to the target declaration.
\begin{theorem}[Soundness 1]
  \label{thm:soundness-1}
  Programs with synthesized references are well-typed.
\end{theorem}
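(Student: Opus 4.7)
The plan is to reduce the theorem to a replay property of the extended relation $\twoheadrightarrowtail$ against the plain Statix solver $\rightarrow$. A program $\mathcal{P}'$ is well-typed when the initial typing constraint $P_0(\mathcal{P}')$ can be reduced to the empty constraint set by $\rightarrow^\ast$ starting from the empty scope graph; by the established soundness result for Statix due to Rouvoet et al., this implies well-typedness in the declarative sense. It thus suffices to show that whenever $\mathsf{synthesize}(\mathcal{P}_{\text{locked}}) = \mathcal{P}_0\theta_{result}$, the configuration $\opsState{\emptyset}{P_0(\mathcal{P}_0\theta_{result})}$ reduces via $\rightarrow^\ast$ to a state with no remaining constraints.

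I would prove this by induction on the length of the $\twoheadrightarrowtail^\ast$ derivation reaching the accepting configuration $\opsRSState{\mathcal{G}}{\emptyset}{U'}{H'}$, maintaining the invariant that the substitution accumulated so far, composed with the synthesized hole terms, yields a program whose constraint system reduces via \emph{only} the plain $\rightarrow$ relation to the current constraint set of the intermediate configuration. The base case is delivered by the initial $\rightarrow^\ast$ reduction that the $\mathsf{synthesize}$ definition already performs. For the inductive step, the \textsc{Op-Solve} case is immediate since it reuses a $\rightarrow$ step verbatim, so the argument focuses on the two speculative expansion rules.

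For \textsc{Op-Expand-Pred}, unifying the stuck predicate $P(\overline{t_1})$ with the head of a chosen rule $P(\overline{t_2}) \leftarrow C \in \mathbb{S}$ produces a substitution $\theta$ under which the disjointness assumption on rule domains forces \textsc{Op-Pred} to fire with exactly this rule on the instantiated program, so the speculative expansion is faithfully replayed. For \textsc{Op-Expand-Query}, the premise that there exists $p \in \mathsf{Ans}(\mathcal{G}\theta_1\theta_2, \ldots)$ with $\mathsf{tgt}(p) = s''$, combined with $\mathsf{guard}$, ensures that once the query source is unified with the ground scope $s'$ the plain \textsc{Op-Query} rule fires on an answer set that contains this witnessed path. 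The hole-state update $H[h \mapsto (s' \cdot s_t \cdot \overline{s'_t}, t')]$ records only synthesis metadata and does not interfere with the constraint reduction.

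The main obstacle will be the \textsc{Op-Expand-Query} case, because speculation commits to a single path $p$ whereas \textsc{Op-Query} ultimately resolves against the \emph{entire} answer set, which may prune paths under the visibility order $\orderSyntax$. The key observation is that $\mathsf{guard}$ is precisely what forbids remaining constraints from adding edges that could invalidate the committed path or introduce shadowing ones, so the answer set observed at the moment of speculation remains stable and agrees with what \textsc{Op-Query} sees on replay. Combining the inductive argument with the $\mathsf{Accept}$ premise of \textsc{Op-Expand}, which guarantees that an accepting configuration has no remaining constraints and that every hole has a fully constructed composite path (hence no free unification variable in $\theta_{result}$ remains hole-related), then yields a complete $\rightarrow^\ast$ derivation witnessing well-typedness of $\mathcal{P}_0\theta_{result}$.
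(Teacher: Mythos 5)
Your overall strategy---replaying the synthesis trace as a plain $\rightarrow$ derivation on the fully instantiated program---is the same basic idea as the paper's proof, but your lock-step simulation invariant has a genuine gap that the paper's argument is specifically engineered to avoid. The problem is stuckness, not failure. \textsc{Op-Pred} only fires when a \emph{unique} rule matches, and \textsc{Op-Query} only fires when the source scope is ground and the guard holds. In the synthesis trace, the instantiation that disambiguates a predicate (or grounds a query source) is supplied \emph{eagerly} by the speculative substitution $\theta$; in your replay, that same instantiation may only become available after constraints that occur \emph{later} in the derivation have been solved, because $\theta_{result}$ substitutes only the hole variables, and the intermediate constraint sets still contain non-ground existentials. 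So at the point where you want \textsc{Op-Pred} to ``fire with exactly this rule on the instantiated program,'' the arguments may still unify with several rule heads and the plain solver is stuck there. The disjointness assumption on rule domains only pins down a unique rule for sufficiently instantiated argument tuples; it does not rescue you mid-derivation. Your invariant (``reduces via only the plain $\rightarrow$ relation to the current constraint set of the intermediate configuration'') therefore cannot be maintained step by step.

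The paper sidesteps this by never attempting a lock-step replay. It converts each speculative step into an ordinary solving step by \emph{adding the embedding of the substitution as an explicit equality constraint} (so the eager instantiation is still available in-trace), then removes the intermediate equalities with a weakening lemma that only guarantees the weakened trace does not \emph{fail}---it may still get stuck---and finally invokes the standing assumption $\mathbb{S} \vdash P_0 \mathrel{\mathbf{init}}$ (the initial constraint on a \emph{ground} program never gets stuck, only succeeds or fails) together with confluence to rule out the stuck outcome. That assumption is part of the formal theorem statement and is doing essential work; your proof neither uses nor acknowledges it, and without it the argument does not go through. (A smaller point: your worry that \textsc{Op-Expand-Query} ``commits to a single path'' while \textsc{Op-Query} resolves the whole answer set is largely moot---\textsc{Op-Expand} leaves the query constraint in the constraint set, so it is re-solved later by the ordinary rule; the premises of \textsc{Op-Expand-Query} merely ensure it is then solvable.)
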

%
\begin{theorem}[Soundness 2]
  \label{thm:soundness-2}
  Every synthesized reference corresponds to a composite path (\cref{def:composite-path}) to the target it was initially locked to.
\end{theorem}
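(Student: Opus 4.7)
The plan is to prove Theorem~\ref{thm:soundness-2} by establishing an invariant on configurations reachable under $\twoheadrightarrowtail^\ast$, and then using that invariant at the accepting state to extract the required composite path. Concretely, I would introduce the following invariant on every configuration $\opsRSState{\mathcal{G}}{\overline{C}}{U}{H}$ in the synthesis trace: for each hole $h \in \mathsf{dom}(H)$ with $H(h) = (s_0 \cdot s_1 \cdots s_n, t)$, (i) consecutive scopes $s_{i-1}$ and $s_i$ are query-connected in $\mathcal{G}$ (via a query whose source/target terms have been unified along the trace) for all $1 \le i \le n$, and (ii) $s_n = s_d$, the scope originally chosen to witness the locked declaration, with the target declaration contained in $\rho_{\mathcal{G}}(s_n)$.

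For the base case, the hole state produced by the $\mathsf{synthesize}$ construction maps each $h$ to a singleton sequence $(s_d, x)$, satisfying both clauses vacuously (no consecutive pair) and by construction ($D(x) \in \rho_{\mathcal{G}_0}(s_d)$). The inductive step reduces to checking each rule of $\twoheadrightarrowtail$. The only rule that prepends to the sequence of a hole is \textsc{Op-Expand-Query}: it turns $(s_t \cdot \overline{s'_t}, t')$ into $(s' \cdot s_t \cdot \overline{s'_t}, t')$, and its premises provide exactly a path $p \in \mathsf{Ans}(\mathcal{G}\theta_1\theta_2, \qBase{\orderSyntax}{s'}{r}{\lambda y.\: E\theta_1\theta_2})$ with $\mathsf{tgt}(p) = s''$ and $s_t = s''$ or $s_t \in \rho_{\mathcal{G}}(s'')$ — which is precisely \Cref{def:query-connected-scopes} instantiated to $s' \xxrightarrow{q} s_t$. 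Clause (ii) is preserved because the tail of the sequence is unchanged.

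The remaining rules either leave $H$ structurally untouched (only applying a substitution $\theta$ elementwise), or extend $\mathcal{G}$ monotonically via \textsc{Op-New-Scope}/\textsc{Op-New-Edge}. Here lies the main obstacle: I need query-connectedness, which quantifies over $\mathsf{Ans}(\mathcal{G}, \cdot)$, to be preserved under (a) growth of $\mathcal{G}$ and (b) unification of terms that may previously have been variables inside $s'$, $s''$, or the data-well-formedness predicate. For (a), the key lever is the $\mathsf{guard}$ premise of \textsc{Op-Expand-Query} (inherited from~\citet{RouvoetAPKV20}), which precisely ensures that no subsequent constraint can add an edge that changes the answer of the query that witnessed query-connectedness; for (b), I would argue that applying a unifier $\theta$ globally transports witnesses in $\mathsf{Ans}$ along $\theta$, so if $s' \xxrightarrow{q} s_t$ holds before, then $s'\theta \xxrightarrow{q\theta} s_t\theta$ holds after. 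A small lemma about commutation of $\mathsf{Ans}$ with substitution on ground scopes (which is all that appears in the composite path) makes this routine.

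Finally, at an accepting state $\opsRSState{\mathcal{G}}{\overline{C}'}{U'}{H'}$, the $\mathsf{Accept}$ predicate guarantees that every hole $h$ has a composite-path component of length at least two, and the invariant gives an actual chain $s_0 \xxrightarrow{q_1} s_1 \cdots \xxrightarrow{q_n} s_n$ with $s_n = s_d$ and the original target declaration in $\rho_{\mathcal{G}}(s_d)$. Since the term $t$ in $H'(h)$ is the concrete reference substituted into $\mathcal{P}_0$ by $\theta_{result}$, this chain is exactly the composite path in $\mathcal{G}$ witnessing that the synthesized reference reaches the declaration to which the locked reference was originally bound, completing the proof.
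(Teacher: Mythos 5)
Your proposal is correct in outline and shares the paper's skeleton: induct over the synthesis trace, seed the induction with the singleton hole state $(s_d, x)$ produced by $\mathsf{synthesize}$, and observe that the premises of \textsc{Op-Expand-Query} are exactly an instantiation of \cref{def:query-connected-scopes} for the prepended scope. The one place where you genuinely diverge from the paper is in how you handle the fact that, at the moment \textsc{Op-Expand-Query} prepends $s'$ to the path, the witnessing query constraint is still \emph{unsolved} and still sitting in $\overline{C}$. The paper's appendix formalizes composite paths via a judgment whose \textsc{CP-Comp} rule requires each query to be \emph{solved} in the current configuration, and it explicitly notes that a per-step invariant for $\twoheadrightarrowtail$ therefore cannot hold; instead it proves an ``eventually well-formed'' lemma over each segment consisting of one expand step followed by exhaustive $\rightarrow$-solving, arguing via confluence that the prepared query does get solved in that segment. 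You instead keep a per-configuration invariant by using the weaker, body-text notion of query-connectedness (mere membership in $\mathsf{Ans}$), which forces you to carry the $\mathsf{Ans}$-witnesses forward through subsequent graph growth and unification; your appeal to the $\mathsf{guard}$ premise and to a substitution-commutation lemma on ground scopes is the right lever for that, and it makes explicit a stability assumption that the paper's terse proof of its preservation lemma (\cref{lem:solve-preserves-wfh}) leaves implicit. Both routes work; the paper's segmentation localizes the stability argument to the point where the query is actually discharged by \textsc{Op-Query}, whereas yours front-loads it into the invariant and pays for it with the extra persistence lemmas. If you pursue your version, state and prove the persistence of $\mathsf{Ans}$-membership under the remaining constraints precisely --- that is the only step in your sketch that is not already routine.
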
%
\iftoggle{extended}{%
  \Cref{thm:soundness-1,thm:soundness-2} have formal definitions and proofs in~\cref{subsec:well-typed-solutions,subsec:composite-paths}, respectively.
}{%
  Formal definitions and proofs of~\cref{thm:soundness-1,thm:soundness-2} can be found in the extended version of this paper~\cite{Pelsmaeker_OOPSLA25_Extended}, appendices A.2 and A.3, respectively.
}%

\paragraph{Completeness}
Ideally, we would conjecture \emph{completeness} as well.
However, completeness is hard to define, as it must rely on a generic (language-independent) definition of a reference.
For the purpose of this paper, we consider a well-formed reference in scope $s_0$ to declaration $d$ to be a composite path (\cref{def:composite-path}) through scope graph $\mathcal{G}$, from $s_0$ to $s_d$ where $d \in \rho_{\mathcal{G}}(s_d)$ (\ie, $s_d$ is associated with declaration $d$).
However, this definition is an over-approximation, as queries could be connected `by accident'.
For example, a Java expression \texttt{a.m(b)} where \id{a} is an instance of the class in which this expression occurs could be considered a reference \id{a.b} by our definition, as the target of the query for \id{a} would match the source of the query of \id{b}.
Thus, our definition is not suitable to state a completeness theorem.
In~\cref{subsec:results} we show experimental evidence that our approach is practically complete.

\paragraph{Confluence}
We also consider the property of \emph{confluence}: if two different expansions are possible, eventually, the final state sequence will be equivalent.

\begin{theorem}[Confluence]
  If\ $\overline{\kappa} \twoheadrightarrowtail \overline{\kappa}_1$~and~$\overline{\kappa} \twoheadrightarrowtail \overline{\kappa}_2$, %
  then $\exists \overline{\kappa'}.\: \overline{\kappa}_1 \twoheadrightarrowtail^\ast \overline{\kappa'} \land \overline{\kappa}_2 \twoheadrightarrowtail^\ast \overline{\kappa'} $
\end{theorem}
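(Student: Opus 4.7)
The plan is to prove local confluence by case analysis on the two single steps $\overline{\kappa} \twoheadrightarrowtail \overline{\kappa}_1$ and $\overline{\kappa} \twoheadrightarrowtail \overline{\kappa}_2$, splitting first on which configuration in the sequence each step modifies, and then on which of \textsc{Op-Solve} and \textsc{Op-Expand} was used.

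The case where the two steps target different configurations $\kappa_i \neq \kappa_j$ in $\overline{\kappa}$ is immediate: each step of $\twoheadrightarrowtail$ mutates a single slot of the sequence (possibly replacing it with several configurations), so the two steps act on disjoint positions and commute. One additional step on each side closes the diamond.

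When both steps target the same configuration $\kappa$, the mixed subcase \textsc{Op-Solve}/\textsc{Op-Expand} is ruled out by \textsc{Op-Expand}'s premise $\kappa \not\to$. If both apply \textsc{Op-Expand}, they evaluate the same set comprehension over $\rightarrowtail$ on the same state, so $\overline{\kappa}_1 = \overline{\kappa}_2$ and the diamond closes trivially. The nontrivial subcase is two applications of \textsc{Op-Solve}, which reduces to local confluence of the underlying $\rightarrow$ relation on a single configuration. I then do a further case split on the pair of $\rightarrow$-rules applied: rules acting on distinct constraints in the bag $\overline{C}$ commute (using the reordering $x;y \approx y;x$ of constraint sequences), while rules acting on the same constraint are deterministic up to the choice of fresh names in \textsc{Op-Exists} and \textsc{Op-New-Scope} and up to the choice of most general unifier in \textsc{Op-Eq} and \textsc{Op-Pred}, yielding configurations that agree modulo $\alpha$-equivalence.

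The hard part will be showing that substitution-producing rules (\textsc{Op-Eq}, \textsc{Op-Pred}) and scope-graph-mutating rules (\textsc{Op-New-Scope}, \textsc{Op-New-Edge}, \textsc{Op-Query}) commute when applied to distinct constraints. The key lemma I would establish is that solving any single constraint only \emph{refines} the solver state: the scope graph grows monotonically and free variables become further instantiated, so any other enabled $\rightarrow$-step remains enabled after applying the intervening substitution and scope graph extension, producing the same combined effect. In particular, the $\mathsf{guard}$ premise of \textsc{Op-Query} is preserved because it is monotone in the scope graph, and the propagation of substitutions through the hole-tracking components $U$ and $H$ is order-independent because the guard on $U[t/x]$ depends only on which holes the free variables are associated with, not on the order in which substitutions are composed.
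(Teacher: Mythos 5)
Your top-level decomposition is exactly the paper's: case-split on whether the two steps target the same configuration in the sequence, dispose of the distinct-configuration case by commutation, rule out the mixed \textsc{Op-Solve}/\textsc{Op-Expand} case via the $\not\to$ premise, and observe that two \textsc{Op-Expand} steps on the same state yield identical results because the set comprehension is deterministic. The one place you genuinely diverge is the \textsc{Op-Solve}/\textsc{Op-Solve} case: the paper discharges it in one line by citing confluence of the underlying $\rightarrow$ relation as an established result (\citet[Theorem~4.5]{RouvoetAPKV20}), whereas you propose to re-prove it by a pairwise commutation analysis of the $\rightarrow$ rules. Your sketch of that re-proof has the right shape (a refinement/monotonicity lemma showing enabled steps stay enabled), but it is by far the hardest part of the whole argument and your justification for the query case is imprecise: the $\mathsf{guard}$ premise is a condition on the \emph{remaining constraint set} $\overline{C}$ (whether it could still contribute an edge relevant to the query), not merely on the scope graph, so ``monotone in the scope graph'' does not suffice. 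What you actually need is that when the guard holds, no other enabled step can add an edge that changes $\mathsf{Ans}$, and that the guard's over-approximation of future edges is stable under expanding the other constraint --- which is precisely the content of the cited theorem. Since that result is available off the shelf, your extra work buys nothing here and introduces the one spot where the argument as written would need repair; citing it instead makes the proof both shorter and airtight.
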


\begin{proof}
  This is a proof by case analysis on the expanded state:
  \begin{itemize}
    \item If different states were expanded in $\overline{\kappa}_1$ and $\overline{\kappa}_2$, the step made to obtain $\overline{\kappa}_1$ can be applied on~$\overline{\kappa}_2$ as well, and vice versa.
            This yields equivalent states again.
    \item If the same state was expanded in $\overline{\kappa}_1$ and $\overline{\kappa}_2$, either one of the following is true:
    \begin{itemize}
      \item An \textsc{Op-Solve}-step was made in both cases.
        In this case, confluence holds by virtue of~$\rightarrow$~being confluent~\citep[Theorem~4.5]{RouvoetAPKV20}.
      \item An \textsc{Op-Expand}-step was made in both cases.
        As this rule ranges over all possible expansions, $\overline{\kappa}_1 = \overline{\kappa}_2$, so confluence trivially holds.
    \end{itemize}
    The same state cannot be expanded with both \textsc{Op-Solve} and \textsc{Op-Expand}, as the first premise of \textsc{Op-Expand} requires the state to be stuck (\ie, \textsc{Op-Solve} does not apply).
  \end{itemize}
  \vspace{-1.2\baselineskip}
\end{proof}
\noindent
This confluence result is especially important for our next section, as we exploit this property to design heuristics that reduce the huge search space of possible expansions.


\paragraph{Liveness}
Finally, albeit not a property of the operational semantics, we discuss \emph{liveness} here as well.
Liveness consists of two components:
(1) the synthesis finds each solution in finite time, and
(2) when no (new) solution is available, the synthesis terminates.
Property (1) can be guaranteed by scheduling the expansion steps `fairly'; \ie in a breadth-first manner.
Property (2) requires special care for predicates that are (potentially) infinitely expanding without yielding a solution.
We return to this in~\cref{subsec:recursive-qualifiers}.

\section{Heuristics}%
\label{sec:heuristics}

\hyperref[sec:operational-semantics]{The operational semantics} is highly non-deterministic.
A direct, naive implementation would perform duplicate and unnecessary work.
To reduce work, our implementation of reference synthesis uses heuristics to guide the search.
These heuristics have three goals: (i)~to guide the search toward results, (ii)~to avoid duplicate work, and (iii)~to cut search branches early that do not lead to results.
For all these heuristics, we argue that they do not yield solutions that are not derivable from the operational semantics (soundness), and preserve all derivable solutions as well (completeness).

In this section, we discuss these heuristics, using the example in \cref{fig:example-state-for-heuristics}.
Performing reference synthesis on the program on the top-left eventually reaches the state shown in the figure.
The first two constraints are obtained from expanding rule \ref{eq:d-importok} on the initial hole on line 5 ($h_A$), represented by the variable $\UV{A_1}$.
Resolving this reference yields a scope, currently represented by the variable $\UV{s_1}$.
Once this scope is resolved, an incoming edge from $\svar{s}_{\id{B}}$ can be created.
Until then, this edge is not present in the scope graph (hence it is indicated with a dashed line).
The other two constraints correspond to the hole on line 6 ($h_x$), which is expanded to a reference with a single qualifier $\UV{A_2}\mathtt{.}\UV{x}$.
The qualifier should resolve to a scope $\UV{s_2}$, in which afterward, a query resolving to the target scope $\svar{s}_{\id{x}}$ will be performed.

\begin{figure}[t]
  \begin{minipage}[l]{0.233\textwidth}
  \begin{lstlisting}[
    language=MiniMod
  ]
    mod `\id{A}` {
      var `\id{x}` = 1
    }
    mod `\id{B}` {
      import `\rholec[5]{\id{A}}`::*
      var `\id{y}` = `\rholec[5]{\id{x}}`
    }
  \end{lstlisting}%
\end{minipage}%
\begin{minipage}{0.35\textwidth}
  \[
    H = \begin{cases}
      \UV{A_1} &\mapsto h_A
      \\
      \UV{x} &\mapsto h_x
      \\
      \UV{A_2} &\mapsto h_x
    \end{cases}
  \]
\end{minipage}%
\begin{minipage}{0.4\textwidth}
  \[
    U = \begin{cases}
      h_A &\mapsto (s_{\id{A}}, \UV{A}_1)
      \\
      h_x &\mapsto (s_{\id{x}}, \UV{A_2}\mathtt{.}\UV{x})
    \end{cases}
  \]
\end{minipage}
\\
\vspace{1em}
\begin{minipage}[c]{0.995\textwidth}
  \centering
  \begin{HugeAngles}
    \begin{minipage}[c]{0.3\textwidth}
      \centering
      \scalebox{0.9}{
        \begin{tikzpicture}[scopegraph,node distance=1em and 1em]
          \node[scope] (s_0) {$s_0$};

          \node[scope,below left=2em and 1.2em of s_0] (s_A) {$s_{\id{A}} \mapsto \id{A}$};
          \draw (s_0) edge[lbl=$\lblMOD$, bend right] (s_A);
          \draw (s_A) edge[lbl=$\lblLEX$] (s_0);

          \node[scope,below right=2em and 1.2em of s_0] (s_B) {$s_{\id{B}} \mapsto \id{B}$};
          \draw (s_0) edge[lbl=$\lblMOD$, bend left] (s_B);
          \draw (s_B) edge[lbl=$\lblLEX$] (s_0);

          \node[scope,below=3.5em of s_A.west,anchor=west] (s_x) {$s_{\id{x}} \mapsto \id{x} : \tyINT$};
          \draw (s_A) edge[lbl=$\lblVAR$, bend right] (s_x);

          \node[scope,below=3.5em of s_B.east,anchor=east] (s_y) {$s_{\id{y}} \mapsto \id{y} : \UV{T}$};
          \draw (s_B) edge[lbl=$\lblVAR$, bend left] (s_y);

          \node[scope, dashed, below = 3.5em of s_0, xshift=-0.5em] (s_var) {$\UV{s}_1$};
          \draw (s_B.west) edge[lbl=$\lblIMP$, dashed, import, out=180, in=0] (s_var.east);
        \end{tikzpicture}
      }
    \end{minipage}
    \vline
    \begin{minipage}[c]{0.52\textwidth}
      \begin{gather}
        \cUser{scopeOfMod}(\svar{s}_{\id{B}}, \UV{A_1}, \UV{s_1})
        \\
        \svar{s}_{\id{B}} \cEdge[\lblIMP] \UV{s_1}
        \\
        \cUser{scopeOfMod}(\svar{s}_{\id{B}}, \UV{A_2}, \UV{s}_2)
        \\
        \cQuery[]{\UV{s_2}}{
          \lblVAR
        }{
          \hoPred{isVar}{\UV{x}}
        }{
          \set{\tuple{\_, \UV{x} : \UV{T}}}
        }
      \end{gather}
    \end{minipage}%
    \quad  
    \vline
    \begin{minipage}{1.25em}
      \[{} \mathrel{U} {}\]
    \end{minipage}%
    \vline
    \begin{minipage}{1.25em}
      \[{} \mathrel{H} {}\]
    \end{minipage}%
  \end{HugeAngles}%
\end{minipage}
\vspace{1em}
  \caption{
    Example program (top-left) and intermediate synthesis state (below and top-right).
  }%
  \label{fig:example-state-for-heuristics}
\end{figure}
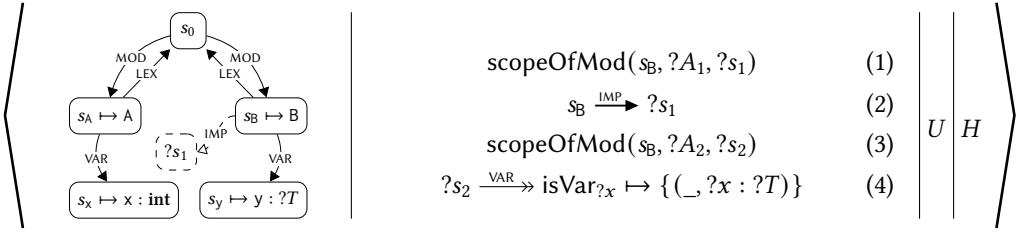%

\subsection{Selecting Constraints}%
\label{subsec:selecting-constraints}

As discussed in~\cref{subsec:operational-semantics-discussion}, our system is \emph{confluent}.
For that reason, we can choose one single order in which to expand constraints, instead of trying all possible orders.
We choose the constraint to expand according to the following criteria (considered in order):
(i)~prefer queries over predicate constraints;
(ii)~prefer predicate constraints that can lead to queries over those that cannot; and
(iii)~prefer older constraints over newer constraints.
%
The rationale behind this order is that we try to expand queries as soon as possible.
Expanding queries typically blows up the search space less than expanding predicates, and queries often reach terminal states (ground references).
Using age as a tiebreaker, we try to explore the remaining state space in a breath-first manner, to ensure we reach all possible references.
In our example, this implies that we first expand the query constraint~(4), inferring \texttt{x} to be the value of $\UV{x}$, and $s_\id{A}$ to be the target $\UV{s_2}$ of constraint~(3).
As we eventually expand all constraints, we preserve soundness and completeness.

\subsection{Expanding Queries}%
\label{subsec:expanding-queries}

When expanding a query, a source and target scope must be chosen ($s'$ and $s''$ in \textsc{Op-Expand-Query} in \cref{fig:refsyn-operational-semantics}).
Instead of trying the query with all possible combinations of source and target scopes, we can be smart about the source and target scopes we pick.
First, we choose $s''$, based on the premise that it should be equal to the current target scope, $s_t$, or contain it.
Then, we infer a unifier from the query, $\mathcal{G} \vdash E[s'' / y] \rightsquigarrow \theta$.
If no such unifier exists, we stop the search for this branch.
Finally, we traverse the scope graph backwards, starting at $s''$.
We use the inverted regular expression (\eg, $\mathsf{inv}(\reclos{\lblLEX}\, \lblVAR) = \lblVAR\, \reclos{\lblLEX}$) to guide the graph traversal to only find source scopes that have a valid path to the target scope.
This approach only over-approximates when there is another declaration, reachable from $s'$, that shadows $s''$.
It is sound, as eventually we check all the premises of the rule.
It is also complete, as all other choices for $s''$ do not satisfy the premise that relates it to $s_t$, and all other choices for $s'$ do not yield valid query answers.

In our example, when expanding the query constraint (4), the only reasonable choice for $s''$ is $s_{\id{x}}$.
Then, solving~$E$ yields $\set{\UV{x} \mapsto \id{x}}$.
Next, following the inverted regular expression ($\lblVAR$) traversing the edge $s_\id{A}~\cEdge[\lblVAR]~s_\id{x}$ backwards, we choose $s'$ to be $s_{\id{A}}$.
This instantiation is a valid query expansion.

\subsection{Expanding Predicates}%
\label{subsec:expanding-predicates}

When selecting a predicate (\cref{subsec:selecting-constraints}) we fork the computation and try each possible expansion of a predicate to a rule concurrently.
We prioritize expansions that lead to a query, as those might converge to a result quicker.
Additionally, we prioritize rules that have fewer free variables, as those add less freedom to the problem; \ie, tell us more about the final solution.
As we only prioritize certain rules over others, but never discard any, we will eventually try all rules.
\CBP{Assming fairness?}
Therefore, this does not affect completeness.
In our example, this implies that we prioritize expanding the $\mathsf{scopeOfMod}$ constraints using rule \ref{eq:s-mod} before rule \ref{eq:s-qmod}.

\subsection{Isolating Holes}%
\label{subsec:isolating-holes}

The schemes in~\cref{subsec:selecting-constraints} and~\cref{subsec:expanding-predicates} reduce the search space significantly, 
but may skew the search to holes with less complicated solutions.
Therefore, when starting the computation, we fork the state for each hole, which we call the \emph{focus hole} of that search branch.
We only expand constraints that are related to this focus hole, ensuring we make progress on this hole specifically.

However, this approach is incomplete, as sometimes the solution of one hole can only be computed after another hole is solved.
For example, for the program in~\cref{fig:example-state-for-heuristics}, \id{x} is a valid solution for $h_x$ (although in a different branch than the state presented there).
However, this can only be computed when the solution for $h_A$ is known, as it requires the edge $s_\id{B} \cEdge[\lblIMP] s_\id{A}$ to be present in the scope graph.
We need a way to ensure such `composite' solutions are computed as well.

We ensure this as follows.
Suppose query expansion (\cref{subsec:expanding-queries}) traverses an $l$-labeled edge.
When there is a scope $s$ such that future steps \emph{might} create an $l$-labeled edge in $s$ (\ie, $\overline{C} \hookrightarrow (s, l)$; see \citet[\S{}5.3]{RouvoetAPKV20}), we might miss edges.
In this case, we find the constraint that is responsible for the missing edge, and all constraints that (transitively) share a variable with this constraint.
Some of these constraints may correspond to a different hole.
Then, we peek at that hole for solutions and try insert those solutions in our current state.
Solving this state should create the missing edge.
Next, we can resume our backwards traversal.
Since we over-approximate the missing edges ($\hookrightarrow$ over-approximates, and the future edges may also have different target scopes), we preserve completeness.
In addition, we do not change the traversal itself, so we also preserve soundness.

In the example, we detect that constraint~(2) is responsible for the missing edge.
As this constraint shares a unification variable with constraint~(1), we detect it is related to $h_A$.
Thus, we insert a solution for $h_A$ (\eg, \id{A}) in the state, after which we can find the solution \id{x} for $h_x$.

\subsection{Recursive Qualifiers}%
\label{subsec:recursive-qualifiers}

Another heuristic concerns \emph{recursive qualifiers}.
We consider a reference recursive if multiple qualifiers in the reference resolve to the same declaration.
For example, consider a Java class \id{A} with field \Java|int x| and a field \Java|A a|.
In this case, a hole referring to \id{x} has an infinite number of possible solutions: \Java|x|, \Java|a.x|, \Java|a.a.x|, etc.
In these cases, each \id{a} refers to the same declaration.

To explain how we optimize synthesizing these references, we consider two of the states the synthesis of \Java|a.a.x| goes through.
After some steps, the synthesis reaches the following state:
\[
  \kappa_1 = \left\langle
    \SG
    \mkern3mu \middle| \mkern3mu \mathsf{resolveQVar}(s_{\id{A}}, \UV{q}_1, \UV{s}_1)
    \mkern3mu \middle| \mkern3mu \set{ \UV{q}_1 \mapsto h, \ldots }
    \mkern3mu \middle| \mkern3mu \set{ h \mapsto \big(s_\id{A} \cdot s_\id{x}, \UV{q}_1.\id{x}\big), \ldots }
  \right\rangle
\]

\noindent
Some forks of this branch will arrive at \Java|a.x| as a solution for this hole.
However, on the path to synthesizing \Java|a.a.x|, the following state will also be reached:
\[
  \kappa_2 = \left\langle
    \SG
  \mkern3mu \middle| \mkern3mu
    \mathsf{resolveQVar}(s_{\id{A}}, \UV{q}_2, \UV{s}_2)
  \mkern3mu \middle| \mkern3mu
    \set{ \UV{q}_2 \mapsto h, \ldots }
  \mkern3mu \middle| \mkern3mu
    \set{ h \mapsto \big(s_\id{A} \cdot s_\id{A} \cdot s_\id{x}, \UV{q}_2.\id{a}.\id{x}\big), \ldots }
  \right\rangle
\]

\noindent
These states are very similar: they share the same scope graph, have $\alpha$-equivalent constraints, and the same target scope in the hole state of $h$.
For that reason, we can conclude that \emph{each solution} for $\UV{q}_1$ is \emph{also a solution} for $\UV{q}_2$, as any derivation starting in $\kappa_1$, is also valid in $\kappa_2$.
Therefore, we can avoid synthesizing the same solution multiple times by reusing solutions for $\UV{q}_1$ when synthesizing $\UV{q}_2$.
As the traces are equivalent, we preserve soundness and completeness.

Algorithmically, we achieve this in the following three steps:
(1)~We detect pairs of solver states where
    (a)~the constraints are equivalent up to $\alpha$-renaming,
    (b)~the hole's term in the recursive solver state is an instantiation of the term in the base state, and
    (c)~the hole has the same target scope.
  For these states, we stop further synthesis on the recursive state.
(2)~For each ground solution for the variable in the base state, we emit a solution derived from the recursive state.
(3)~Repeat from step~2, using this new solution.

In the example above, this detects that $\kappa_2$ is a recursive state with respect to $\kappa_1$.
Thus, solutions from other branches rooted in $\kappa_1$ (such as \Java|a.x|) are reused in the variable $\UV{q}_2$ in $\kappa_2$, yielding \Java|a.a.x|.

This expansion is interleaved with executing other search branches, in order to guarantee liveness.
In the case no base solutions exist, this immediately terminates a search branch that would otherwise run infinitely without returning results.
This ensures that we are terminating on recursive instances that match this pattern.
When we assume the predicates that model references do not generate new scopes, only a finite number of recursive instances can be generated (because there exist a finite number of scopes in the scope graph, and a finite number of rules in a specification; hence only a finite number of states that are not equivalent according to the definition in step~1 above exist.).
That implies that we can only have non-termination when the recursive reference predicate generates fresh scopes, which typical specifications do not do.

The \hyperref[sec:evaluation]{next section} evaluates our implementation, which is based on these heuristics, providing evidence that we indeed preserve soundness and completeness.

\section{Experimental Evaluation}%
\label{sec:evaluation}

In evaluating reference synthesis, we focus on three key criteria: (1) ensuring that synthesized references are valid according to the static semantic specification of the language and resolve to the intended declaration (\emph{soundness}), (2) verify the ability to find a valid reference if it exists (\emph{completeness}), and (3) assess the efficiency of the synthesis process (\emph{performance}).
In this section we discuss how we evaluated these aspects of reference synthesis.


\pagebreak[4]
\subsection{Languages}
We evaluated our $\mathsf{synthesize}$ function on three larger programming languages: Java, ChocoPy, and Featherweight Generic Java (FGJ).

\begin{enumerate}
  \item \emph{Java} is a mainstream language with sophisticated name binding features.
  We evaluated our approach using an existing Statix specification of Java 8 from the artifact~\cite{AntwerpenV21-artifact} associated with the work of~\citet{AntwerpenV21}.
  This specification of Java unfortunately does not support generics or method references.
  We derived test cases from Java files used to validate the implementation of refactorings in the JetBrains IntelliJ IDE.\footnote{https://github.com/JetBrains/intellij-community/tree/idea/233.14808.21/java/java-tests/testData/refactoring}
  
  \item \emph{ChocoPy}~\cite{PhadyeSH19-SPLASHE} is a statically-typed dialect of Python.
  We used an existing Statix specification and ChocoPy files for our evaluation.
  
  \item \emph{Featherweight Generic Java}~\cite{IgarashiPW01} (FGJ) is a functional Java core language with full generics support.
  We used the Statix specification from the artifact~\cite{AntwerpenPRV18-artifact} of~\citet{AntwerpenPRV18}.
\end{enumerate}

\begin{figure}[t]
  \captionbox{
    Test selection.
    \label{fig:test-selection}
  }[0.55\textwidth]{
    \footnotesize
    \begin{tabular}{|l|r|r|r|}
      \hline
                            &                     Java &                     ChocoPy &                    FGJ \\
      \hline
      \hline
      Included              & \javaTotalTests{} (62\%) & \chocopyTotalTests{} (64\%) & \fgjTotalTests{} (94\%) \\
      \hline
      Negative              &                0   (0\%) &                  55  (18\%) &               0   (0\%) \\
      Java 8 incompat.      &             1076  (26\%) &                 N/A         &             N/A         \\
      Spec incompat.        &              197   (5\%) &                  10   (3\%) &               0   (0\%) \\
      No references         &              304   (7\%) &                  47  (15\%) &               3   (6\%) \\
      \hline
      Total                 & \javaOriginalTestSet{} (100\%) & \chocopyOriginalTestSet{} (100\%) & \fgjOriginalTestSet{} (100\%) \\
      \hline
    \end{tabular}%
  }%
  \hfill
  \captionbox{
    Test outcomes.
    \label{fig:test-outcomes}
  }[0.45\textwidth]{
    \footnotesize
\begin{tabular}{|l|r|r|r|}
  \hline
  &  Java  &  ChocoPy  &  FGJ  \\
  \hline
  \hline
  Success & 2382 (94\%) & 155 (79\%) & 38 (78\%) \\
  Timeout & 146 (6\%) & 41 (21\%) & 11 (22\%) \\
  Failure & 0 (0\%) & 0 (0\%) & 0 (0\%) \\
  \hline
  Total & 2528 (100\%) & 196 (100\%) & 49 (100\%) \\
  \hline
\end{tabular}
  }%
\end{figure}

\subsection{Method}
For each language we used the existing Statix semantic specifications \emph{without modification}, as our $\mathsf{synthesize}$ function works directly with these specifications.
We collected a set of test cases: single-file programs where we locked variable names, type names, and qualified member names that occur in it.
As a result, many test cases contain multiple locked references.
Each locked reference only has the target declaration as a parameter.
The original syntax of the reference is erased.
We excluded negative test cases (tests that validated that the specification or implementation gives an error on incorrect programs), those that are incompatible with our Statix specification or Java 8 (in the case of Java tests), and those that had no references to lock.
The resulting selection is shown in~\cref{fig:test-selection}.
Our reference synthesis algorithm was then applied to propose concrete references until the original reference in the program was recovered.
We set a timeout of 60~seconds per test case and ran the evaluation on a MacBook~Pro~2019 with a 2.4~GHz Intel Core~i9 processor and 64~GB memory.

\subsection{Results}%
\label{subsec:results}

A summary of the results of the evaluation is shown in~\cref{fig:test-outcomes}.
The results confirm the \emph{soundness} of our reference synthesis algorithm.
Substituting the synthesized references back into the original programs only produced well-typed programs (\cref{thm:soundness-1}).

Our evaluation demonstrates a high level of \emph{completeness}, as it successfully synthesized each reference encountered in our test cases, including non-trivial references such as Java's \Java|A.super.x| and references with ambiguous qualifiers~\cite[\S{6.5.2}]{JLS8}.
Nevertheless, formally proving \emph{completeness} of the algorithm is challenging (\cref{subsec:operational-semantics-discussion}).
%
%

\noindent
\begin{wrapfigure}{r}{7cm}
  \begin{center}
    \includegraphics[width=7cm]{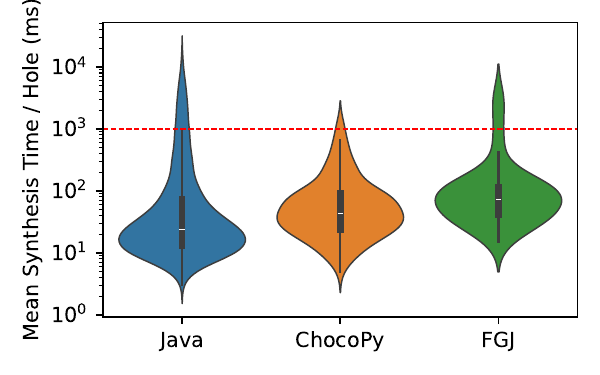}
  \end{center}%
  \caption{
    Logarithmic plot of the time spent synthesizing references per hole in each of \allSuccessfulTests{} successful test cases.
    The {\color{red}dashed line} marks 1 second.
  }%
  \label{fig:reference-synthesis-time-plot}
\end{wrapfigure}
While \emph{performance} was not the main focus of our approach, we also measured the time taken to synthesize each reference, shown as a violin plot in~\cref{fig:reference-synthesis-time-plot}.
The plot illustrates the distribution of the synthesis time per hole as a density curve, where for all three languages most results lie between 10 and 200 milliseconds.
A small box plot is depicted inside each density curve, highlighting the median, range of the central 50\% results, and range of the remaining data.
For the majority of locked references the algorithm proposed a solution within one second.
In around 7\% of the test cases, reference synthesis failed to find a solution for all locked references within the 60-second timeout.
This typically occurred when references were strongly interdependent, as reference synthesis will exhaustively explore all combinations of solutions for the dependent and the dependency (see~\cref{subsec:isolating-holes}).

\subsection{Threats to Validity}
To mitigate bias in the references we lock, or the alternatives we try, we make two conservative assumptions that are not generally true for a refactoring:
(1) \emph{all} references in the program must be locked; and
(2) the original reference syntax is unknown.
These assumptions make our test cases more challenging and may result in a worse performance than it would be in practical scenarios.

Instead of locking all references in the program, a typical refactoring would only need to lock a subset of those references, namely those that could get changed due to the program transformation.
Consequently, the likelihood of interdependent locked references would be lower.
Furthermore, most existing references are likely to remain valid despite the transformation.
Therefore, reference synthesis could prioritize verifying that the existing reference syntax still resolves to the intended declaration before synthesizing a new concrete reference.
As a result, only a handful of references need to be synthesized.


Another potential threat to validity is that our approach is parameterized by the language's static semantics written in Statix.
As shown by~\citet{RouvoetAPKV20, AntwerpenPRV18, ZwaanP24-0}, Statix can express many interesting name binding and type system concepts, but it is yet unclear what language concepts Statix cannot express.
The features of Statix that we heavily rely on are the solver interface, the presence of predicate constraints and query constraints, and conservative scheduling.
Given that we do not expect significant changes in any of these, we expect that our reference synthesis algorithm will work without fundamental modifications being required by possible future extensions to Statix.

\section{Related Work}%
\label{sec:related-work}


Refactoring as a discipline and subject goes back to the pioneering works by Griswold~\cite{Griswold:1992:PRA:144627} and Opdyke~\cite{Opdyke1992}.
Since then, refactoring has become a well-established field that has grown exponentially.
Steimann~\cite{Steimann18} notes that the growth has been so large that a recent attempt to update the Mens~and~Tourw\`{e}'s survey~\cite{MensT04} was abandoned as there were too many works to be considered.
Consequently, we focus our discussion on prior work most relevant to reference synthesis.

A distinguished line of work on implementing refactorings is Thompson et al.'s work on refactoring tools for Haskell~\cite{LiRT03,LiTR05} and Erlang~\cite{2455}, which provides support for scripting a general-purpose code transformations with possibilities to specify pre- and post-conditions that ensure the transformation preserves certain properties explicitly.
This support is realized via name-binding APIs implemented from scratch for each individual language.


As discussed in the introduction, our work builds upon previous work by~\citet{EkmanSV08,SchaferEM08,SchaferTST12,SchaferMOOPSLA2010}, who introduced the idea of tracking name-binding dependencies by ``locking'' references to declarations and then synthesizing concrete references.
Sch\"{a}fer~and~de~Moor applied this concept to synthesizing references for Java but deemed their reference synthesis algorithm to be beyond the scope of their paper~\cite[\S{2}]{SchaferMOOPSLA2010}.
Our reference synthesis algorithm is applicable to any language whose typing rules are defined using Statix~\cite{AntwerpenPRV18}.



In other closely related previous work, de~Jonge and Visser~\cite{JongeV12-LDTA} describe a language-generic API for name-binding preserving refactorings.
Their approach is inspired by the work of Ekman et~al.~\cite{EkmanSV08} on JastAdd~\cite{EkmanH07}, which they generalize by giving operations for querying name-binding information and requalifying names.
They demonstrate that this approach could be applied to multiple languages, including Stratego and a subset of Java~\cite{Visser01,VisserBT98}.
The main difference between their generic API and our work is that de~Jonge~and~Visser require the requalification function to be explicitly provided.
Our work provides this function.


Tip et al.~\cite{Tip07} demonstrate that the problem of checking that a refactoring preserves well-typedness in a language like Java can be expressed as a \emph{type constraint problem}~\cite{PalsbergSchwartzbach94}.
Their type constraint framework supports a wide range set of refactorings for a large subset of Java.
While their work focuses on Java-specific constraints, we address the broader problem of guaranteeing that references resolve to the same declaration for \emph{any} programming language whose semantics of name resolution is defined using scope graphs, with Java serving as one of the case studies.

Steimann~\cite{Steimann18} generalizes the work of Tip et al.\@ to also consider binding preservation and provides a more general and language-independent foundation for constraint-based refactoring.
Although this foundation could in principle support refactorings in terms of name-binding constraints that would guarantee binding preservation in any language, the question of how to provide a language-parametric semantics for such constraints is left open.
Our language-parametric algorithm might provide an answer to this question.

An important aspect of refactorings that move code is to avoid accidental name capture.
A common approach to avoiding name capture is \emph{renaming} (following, \eg, the Barendregt convention~\cite{0067558}).
While our reference synthesis algorithm currently focusses on synthesizing (qualified) references, it does not produce suggestions where name capture could be avoided by renaming declarations.
The language-parametric Name-Fix algorithm due to Erdweg~et~al.~\cite{ErdwegSD14} provides an interesting solution to this problem.



Despite serving a very different purpose, \citet{PelsmaekerAPV22} define a language-parametric code completion algorithm that also relies on the Statix specification.
Like our approach, they insert a free unification variable in a placeholder position, and partially type-check the program.
By inspecting the stuck constraints, they find and propose type-sound suggestions for code completion.
Their work, together with ours, shows that having a declarative but executable specification of the static semantics is essential to deriving sound, language-parametric editor services.


\pagebreak[4]  
\section{Conclusion}%
\label{sec:conclusion}

We have presented a novel approach reference synthesis that is \emph{automatic}, \emph{language-parametric}, and \emph{sound}---generating only well-typed references to the intended declarations.
This approach is applicable to any language whose static semantics is defined using typing rules in Statix~\cite{AntwerpenPRV18}.
It works out-of-the-box for such languages, is \emph{sound} by construction, and uses non-deterministic search to provide a high degree of \emph{completeness}.
Our evaluation demonstrates that our algorithm works on practical examples, but also reveals that our generic approach comes with a high performance cost, which we intend to explore in future work.

\section{Data Availability}%
\label{sec:data-availability}

This paper is accompanied by \href{https://doi.org/10.5281/zenodo.14592164}{an artifact}~\cite{Pelsmaeker_OOPSLA25_Artifact}, a Docker container that includes our reference synthesis tool, source code, and its dependencies.
Our tool can be executed in \emph{evaluation} mode to reproduce the data presented in~\cref{sec:evaluation}.
The evaluation utilizes a set of test programs with locked references, that are all included in the artifact.
Additionally, the original test sets from which the test cases were derived are also provided for further reference.

\begin{acks}
  We would like to thank Luka Miljak and the anonymous reviewers for their comments and feedback on previous versions of this paper.
  This work is supported by the Programming and Validating Software Restructurings project (\grantnum{NWOTTWMASCOT}{17933}, \grantsponsor{NWOTTWMASCOT}{NWO-TTW, MasCot}{https://www.nwo.nl/en/researchprogrammes/partnership/partnership-programmas/esi-mastering-complexity-mascot}).
\end{acks}

\bibliography{local,bibliography}

\iftoggle{extended}{

\pagebreak[4]  

\appendix

\section{Soundness Proofs}
After some preliminaries (\cref{subsec:preliminaries}), in this appendix we prove~\cref{thm:soundness-1}: that the reference synthesis algorithm only computes well-typed solutions (\cref{subsec:well-typed-solutions}).
Next, we prove~\cref{thm:soundness-2}: that the solutions it computes correspond to a composite path in the scope graph (\cref{subsec:composite-paths}).

\subsection{Preliminaries}%
\label{subsec:preliminaries}

\paragraph{Notation}
To simplify dealing with variable renaming and capture, for these proofs we assume Barendregt's convention.
Where relevant, we present the constraint set in a solver state in the order we intend to solve them.
Moreover, we are overloading some notation for brevity:
\begin{itemize}
  \item We write $\kappa \rightarrowtail \kappa'$ to denote $\exists \theta\, H'.\ \kappa \rightarrowtail \theta, H' \land \kappa' = \opsRSState{\SG}{\overline{C}}{U}{H'}\theta$\\(\ie, $\kappa$ can be expanded into $\kappa'$, ignoring other expansions).
  \item Similarly, we write $\kappa \twoheadrightarrowtail \kappa'$ to denote $\exists \overline{\kappa}'.\ \kappa \twoheadrightarrowtail \overline{\kappa}' \land \kappa' \in \overline{\kappa}'$\\(\ie, $\kappa$ can be step into $\kappa'$, ignoring other steps).
\end{itemize}
We use the $\oplus$ operator to manipulate the components of a solver state:
\[
  \kappa \oplus C = \opsRSState{\SG}{\overline{C}; C}{U}{H}
\]
Finally, we use $\SG_\kappa$, $\overline{C}_\kappa$, $U_\kappa$ and $H_\kappa$ to denote the components of a solver state $\kappa$,
or $\SG_i$, $\overline{C}_i$, $U_i$ and~$H_i$ to denote the components of a solver state $\kappa_i$, and similar for primes.

In these proofs, we will often use exhaustive application of the step relations, written with a bullet symbol ($\bullet$) and defined as follows:

\begin{definition}[Exhaustive Application]
  \[
    \inferrule{
      \kappa_1 \rightarrow^{\ast} \kappa_2 \\ \kappa \not\rightarrow
    }{
      \kappa_1 \rightarrow^{\bullet} \kappa_2
    }
  \]
  and likewise for $\rightarrowtail^{\bullet}$ and $\twoheadrightarrowtail^{\bullet}$.
\end{definition}

The $\kappa \not\rightarrow$ premise states that there are no further steps possible from $\kappa$, ensuring exhaustiveness.
The $\rightarrow^{\bullet}$ relation is only differs from the reflexive transitive closure regarding this additional premise.
As such, we will freely convert between these relations when applicable.

Next, we define which initial constraints we consider well-formed:

\begin{definition}[Well-formed Initial Constraint]
  \label{conj:wellformed-spec}
  \begin{mathpar}
    \inferrule{
      \forall \mathcal{P}\, \kappa.\
        \left(
        \mathit{FV}(\mathcal{P}) = \emptyset \land
          \opsRSState
            {\emptyset}
            {P_0(\mathcal{P})}
            {\emptyset}
            {\emptyset}
          \rightarrow^{\bullet}
            \kappa
        \right)
        \implies
          \overline{C}_\kappa = \emptyset \lor \overline{C}_\kappa = \set{\cFalse}
    }{
      \mathbb{S} \vdash P_0 \mathrel{\mathbf{init}}
    }
  \end{mathpar}
\end{definition}

This specification states that, for any \emph{ground} input program $\mathcal{P}$, the solver will either give an accepting configuration ($C_\kappa = \emptyset$),
or a rejecting configuration ($C_\kappa = \set{\cFalse}$).
That is, we exclude stuckness from being a possible result of solving the initial constraint \emph{on a ground program}.

\pagebreak[4]  

Second, along the lines of definition 4.1 by~\citet{RouvoetAPKV20}, we define the embedding of a substitution $\theta$ as the embedding of the conjunction of the equalities that it represents:

\begin{definition}[Embedding of Subtitution]
\[
  \llbracket \theta \rrbracket = \left( \mathop{\stBigAnd}\limits_{x \in \mathsf{dom}(\theta)} x \cEq x \theta  \right)
\]
\end{definition}

We use this embedding to use constraints and substitution interchangeably:

\begin{lemma}[Equivalence of Embedding and Applying Substitutions]
  \label{lem:embedding-equality}
  \begin{mathpar}
    \kappa_1 \oplus \llbracket \theta \rrbracket
    \rightarrow^{\ast}
    \kappa_2
    \land
    \overline{C}_1 \theta = \overline{C}_2
    \iff
    \kappa_1
    \theta \approx \kappa_2
  \end{mathpar}
\end{lemma}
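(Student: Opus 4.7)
The plan is to prove the biconditional by unfolding $\llbracket \theta \rrbracket$ and reducing the embedded equalities. The key observation is that $\llbracket \theta \rrbracket$ is a separating conjunction of $|\mathsf{dom}(\theta)|$ equality constraints of the form $x \cEq x\theta$, one for each $x \in \mathsf{dom}(\theta)$. Each such constraint can be processed by \textsc{Op-Conj} to separate it from the conjunction, followed by \textsc{Op-Eq} to solve it. Under the standard assumption that $\theta$ is idempotent ($x \notin \mathit{FV}(x\theta)$), the MGU of $x$ and $x\theta$ yields exactly the binding $x \mapsto x\theta$, so the substitutions applied by consecutive \textsc{Op-Eq} steps compose to $\theta$.

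For the forward direction ($\Rightarrow$), I would proceed by induction on $|\mathsf{dom}(\theta)|$. The base case $|\mathsf{dom}(\theta)| = 0$ is immediate: $\llbracket \emptyset \rrbracket = \cEmp$ is solved by \textsc{Op-Emp} without changing the state, and the identity substitution gives $\overline{C}_1 = \overline{C}_1 \theta = \overline{C}_2$. For the inductive step, pick any $x \in \mathsf{dom}(\theta)$, decompose $\theta = \theta' \uplus \set{x \mapsto x\theta}$, and observe that the first few steps of any reduction of $\kappa_1 \oplus \llbracket \theta \rrbracket$ that fully solves the $x$-equality apply the substitution $\set{x \mapsto x\theta}$ throughout the configuration (this is rule \textsc{Op-Eq}'s $\theta$-application in~\cref{fig:statix-operational-semantics}). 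The induction hypothesis applied to $\theta'$ then handles the remaining reduction and assembles the full composition into $\theta$.

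The backward direction ($\Leftarrow$) is symmetric: given $\kappa_1 \theta \approx \kappa_2$, I would construct a witness reduction that processes each equality in $\llbracket \theta \rrbracket$ via \textsc{Op-Conj} and \textsc{Op-Eq}, yielding a final configuration equivalent to $\kappa_1 \theta$. The conjunct $\overline{C}_1 \theta = \overline{C}_2$ then follows from the definition of substitution on configurations together with $\kappa_2 \approx \kappa_1 \theta$, since the reduction never removes or adds constraints from $\overline{C}_1$, only applies substitutions to them.

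The main obstacle I anticipate is handling the configuration equivalence $\approx$ precisely. The MGU produced by \textsc{Op-Eq} is only unique up to orientation of variables (\eg, $x \cEq y$ admits either $\set{x \mapsto y}$ or $\set{y \mapsto x}$), and the order in which the split equalities are dispatched may vary; both of these ambiguities must be absorbed by $\approx$, and spelling out the relevant equivalence closure is the technical heart of the proof. A secondary concern is that arbitrary intermediate states of $\rightarrow^\ast$ need not have fully solved $\llbracket \theta \rrbracket$, but the premise $\overline{C}_1 \theta = \overline{C}_2$ pins down $\kappa_2$ to a state in which it has, since otherwise $\overline{C}_2$ would still contain residual equality constraints not present in $\overline{C}_1 \theta$.
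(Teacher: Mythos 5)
Your proposal is correct and takes essentially the same route as the paper: the paper's proof is a one-line induction on $\rightarrow$ using \textsc{Op-Conj} and \textsc{Op-Eq}, together with the observation that unifying a free variable (as produced by the embedding) with its image cannot fail---which is exactly your idempotence remark. Your version merely spells out both directions and the induction on $|\mathsf{dom}(\theta)|$ in more detail than the paper does.
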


\begin{proof}
  By induction on $\rightarrow$, using rules \textsc{Op-Conj} and \textsc{Op-Eq-True}, using the fact that unification with a free variable (such as generated by the embedding) on one side cannot fail.
\end{proof}

Finally, we state lemmas about weakening and strengthening solver trace.
First, we state that removing equality constraints from a successful solver trace cannot yield failing solver traces (it either gets stuck, or it is successful in which case the equality constraints were redundant):

\begin{lemma}[Equality Weakening]
  \label{lem:weakening}
  \begin{mathpar}
    \forall t_1\, t_2\, \kappa_1\, \kappa_2\, \kappa_2'.\ %
      \left(
      \kappa_1 \oplus (t_1 \cEq t_2)
      \rightarrow^{\bullet}
      \kappa_2
      \land
      \overline{C}_2 = \emptyset
      \right)
    \implies
      \left(
      \kappa_1
      \rightarrow^{\bullet}
      \kappa_2'
      \land
        \overline{C}_2' \neq \set{\cFalse}
      \right)
  \end{mathpar}
\end{lemma}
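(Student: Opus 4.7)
The intuition is that adding the equality $t_1 \cEq t_2$ to $\kappa_1$ and solving it produces a substitution $\theta = \mathsf{mgu}(t_1, t_2)$ that instantiates the remaining constraints of $\kappa_1$. Removing this equality leaves $\kappa_1$'s state strictly more general (less instantiated), and the key insight is that making a state more general can only turn a successful solver step into a stuck step, never into a failing one that introduces $\cFalse$.

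The plan is to first use confluence of $\rightarrow$ (Theorem~4.5 of \citet{RouvoetAPKV20}) to reorder the successful trace so that the equality constraint is solved first. Since the original trace ends with $\overline{C}_2 = \emptyset$, every constraint must eventually be consumed, so the equality is solved via \textsc{Op-Eq} and thus $\theta = \mathsf{mgu}(t_1, t_2)$ exists. This yields a canonical decomposition $\kappa_1 \oplus (t_1 \cEq t_2) \rightarrow \kappa_1 \theta \rightarrow^{\bullet} \kappa_2$ with $\overline{C}_2 = \emptyset$.

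Next, I would proceed by induction on the length of the trace $\kappa_1 \theta \rightarrow^{\bullet} \kappa_2$, constructing in parallel a trace $\kappa_1 \rightarrow^{\ast} \kappa'$ such that at each step the $\kappa_1 \theta$-side state is an instance of the $\kappa_1$-side state under an accumulated substitution extending $\theta$. The purely structural rules (\textsc{Op-Emp}, \textsc{Op-Conj}, \textsc{Op-Exists}, \textsc{Op-Singleton}, \textsc{Op-Forall}) and the scope-graph construction rules (\textsc{Op-New-Scope}, \textsc{Op-New-Edge}, \textsc{Op-Data}) commute with substitution and can always be mimicked. For \textsc{Op-Eq}, the crucial unification-theoretic fact is that if $\mathsf{mgu}(u\theta', v\theta')$ exists, then $\mathsf{mgu}(u, v)$ also exists, since any composed unifier of the more-instantiated problem is itself a unifier of the less-instantiated one, which guarantees an mgu exists by most-generality. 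For \textsc{Op-Pred}, if the uniqueness premise holds on the $\kappa_1 \theta$-side then that same rule is the only one applicable on the $\kappa_1$-side whenever \textsc{Op-Pred} fires there, and its body expansion is identical up to substitution; so this case either goes through or becomes stuck due to non-uniqueness. \textsc{Op-Query} behaves analogously using its groundness and $\mathsf{guard}$ premises. In every case where the $\kappa_1$-side cannot mimic a step, the $\kappa_1$-trace terminates with the blocking constraint unsolved in $\overline{C}_2'$, and that constraint is not $\cFalse$; otherwise, every step is mimicked and $\overline{C}_2' = \emptyset$. Either way, $\overline{C}_2' \neq \set{\cFalse}$.

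The main obstacle is maintaining the precise substitution correspondence and parallelism of scope-graph construction across the induction. Two subtleties demand care: (i)~across successive \textsc{Op-Eq} applications, the substitutions accumulated on the two sides diverge, so the invariant must express their relative generality rather than equality; and (ii)~the two traces construct parallel but not identical scope graphs---the same set of scopes and edges, but with potentially less-instantiated data on the $\kappa_1$-side---so the behavior of \textsc{Op-Query}'s $\mathsf{Ans}$ function under more-general data must be analyzed to confirm that no query answer set on the $\kappa_1$-side introduces $\cFalse$-producing constraints that were absent on the $\kappa_1 \theta$-side. Both subtleties reduce to carefully stated instance relations between states, but the bookkeeping for these relations is the most delicate part of the proof.
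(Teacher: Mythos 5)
Your proof goes in a genuinely different direction from the paper's, and the direction you chose is where the difficulty lives. The paper argues by contradiction in the \emph{opposite} direction: suppose $\kappa_1 \rightarrow^{\bullet} \kappa_2'$ with $\overline{C}_2' = \set{\cFalse}$; then the very same sequence of steps applies verbatim from $\kappa_1 \oplus (t_1 \cEq t_2)$ by simply never selecting the added equality (the only negative premise on the constraint set is the $\mathsf{guard}$ of \textsc{Op-Query}, which an equality constraint cannot affect since it adds no edges), so the failure state is reachable from $\kappa_1 \oplus (t_1 \cEq t_2)$ as well; confluence of $\rightarrow$ then forces every maximal trace from $\kappa_1 \oplus (t_1 \cEq t_2)$ to reach that failure, contradicting the assumed successful trace. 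This lifting is trivial because the two start states differ only by an extra element in the constraint set --- no simulation relation between \emph{instantiated} and \emph{uninstantiated} states is ever needed.

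Your forward simulation between $\kappa_1$ and $\kappa_1\theta$ has a concrete gap exactly where you flag ``delicate bookkeeping'': the \textsc{Op-Query} case. If the query fires on both sides, the less-instantiated data well-formedness predicate on the $\kappa_1$ side can cause $\mathsf{Ans}$ to return a \emph{different} (typically larger) answer set, so the continuations $C[A/z]$ on the two sides are no longer related by your instance invariant --- the $\kappa_1$ side may contain extra constraint instances (e.g.\ from $\cForall{x}{\zeta}{C}$ over more elements) whose counterparts simply do not exist on the $\kappa_1\theta$ side. At that point the induction hypothesis no longer applies and you cannot rule out that one of these extra instances derives $\cFalse$. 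Your proposal acknowledges this must be ``analyzed'' but offers no argument, and it is not a routine instance-relation fact; it is the entire content of the lemma. A secondary, smaller issue is that when a mimicked constraint gets stuck on the $\kappa_1$ side you must still continue exhausting the \emph{other} constraints to produce a genuinely maximal trace ($\rightarrow^{\bullet}$ requires $\kappa \not\rightarrow$), which your parallel construction does not address. I would recommend abandoning the simulation and adopting the paper's contradiction-plus-confluence argument, which sidesteps both problems.
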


\begin{proof}
  The presence of equality constraints cannot block other constraints from being solved.\footnote{\label{fn:blocking}%
    This can be proved by the fact that the only negative condition on the constraint set occurs in the $\mathsf{guard}$ premise of the \textsc{Op-Query} rule.
    However, equality constraints do not affect $\mathsf{guard}$, hence all constraints in $\overline{C}$ can still be solved.
    On the other hand, the $\mathsf{guard}$ premise prevents the general version of this lemma to be true.
    This is expected, as \eg strengthening a constraint set with edge assertions could invalidate earlier queries.
  }
  Thus, if a failure state can be reached from the initial state on the right-hand side, it must have been reachable from the initial state on the left-hand side as well, by not selecting $t_1 \cEq t_2$ to be solved.
  Then, by confluence, any trace for the initial state should reach the failure state.
  This contradicts the assumption that the first trace was successful.
\end{proof}

Second, we use the lemma that strengthening a solver trace with redundant equalities (\ie, equalities that hold in the final state anyway) preserves satisfiability:

\begin{lemma}[Strengthening with Redundant Equalities]
    \label{lem:strengthening-eq}
    \begin{mathpar}
        \forall \kappa_1\, \kappa_2\, \theta.\ %
            \kappa_1
            \rightarrow^{\bullet}
            \kappa_2
        \land
            \kappa_2 \theta = \kappa_2
        \implies
            \kappa_1 \oplus \llbracket \theta \rrbracket
            \rightarrow^{\ast}
                \kappa_2
    \end{mathpar}
\end{lemma}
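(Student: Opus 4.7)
The plan is to split the derivation into two phases: first consume the added equality constraints $\llbracket \theta \rrbracket$ to transform the state into $\kappa_1 \theta$, and then replay the original trace under $\theta$, arriving at $\kappa_2 \theta$, which by hypothesis equals $\kappa_2$.

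For the first phase, I would appeal to \cref{lem:embedding-equality} with the target state instantiated to $\kappa_1 \theta$. The side-condition $\overline{C}_{\kappa_1} \theta = \overline{C}_{\kappa_1 \theta}$ holds by construction, and reflexivity of $\approx$ gives $\kappa_1 \theta \approx \kappa_1 \theta$, so the right-to-left direction of the biconditional yields $\kappa_1 \oplus \llbracket \theta \rrbracket \rightarrow^{\ast} \kappa_1 \theta$.

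For the second phase, I would establish a substitution lemma for the step relation: if $\kappa \rightarrow \kappa'$, then $\kappa \theta \rightarrow^{\ast} \kappa' \theta$. Iterating this along the original trace $\kappa_1 \rightarrow^{\bullet} \kappa_2$ gives $\kappa_1 \theta \rightarrow^{\ast} \kappa_2 \theta = \kappa_2$. Composing with the reduction from the first phase yields the desired $\kappa_1 \oplus \llbracket \theta \rrbracket \rightarrow^{\ast} \kappa_2$.

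The main obstacle is the substitution lemma, which I would prove by case analysis on the step rule from \cref{fig:statix-operational-semantics}. The structural rules (\textsc{Op-Conj}, \textsc{Op-Emp}, \textsc{Op-Exists}, \textsc{Op-Singleton}, \textsc{Op-Forall}) and the scope-graph rules (\textsc{Op-New-Scope}, \textsc{Op-New-Edge}, \textsc{Op-Data}) commute with substitution by straightforward syntactic reasoning. The interesting cases are \textsc{Op-Eq} and \textsc{Op-Pred}, which invoke the classical fact that if $\mathsf{mgu}(t_1, t_2) = \theta'$ then $\mathsf{mgu}(t_1 \theta, t_2 \theta)$ exists and composes appropriately with $\theta$, and \textsc{Op-Query}, where we must show that $\mathsf{Ans}$ commutes with substitution (scopes and labels already appearing in the graph are untouched by $\theta$) and that the $\mathsf{guard}$ premise is preserved, since substitution can only ground variables into scopes and thus only shrinks the set of potentially-extendable edges detected by $\hookrightarrow$. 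A secondary subtlety is the implicit $U[t/x] \neq \bot$ side-condition embedded in the substitution operation on configurations: compatibility of $\theta$ with $U_2$ is given by $\kappa_2 \theta = \kappa_2$, and compatibility with each intermediate $U_i$ follows because the original trace already succeeds, so every substitution it performs is $U$-compatible to begin with, and pre-applying $\theta$ only renames variables consistently rather than introducing new cross-hole conflicts.
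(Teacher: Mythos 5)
Your decomposition is genuinely different from the paper's. The paper keeps the equality constraints $\llbracket\theta\rrbracket$ \emph{unsolved} while replaying the original trace (arguing, via the footnoted observation that equality constraints never influence the $\mathsf{guard}$ premise, that they cannot block anything), reaching $\kappa_2 \oplus \llbracket\theta\rrbracket$, and only then discharges the equalities, using \cref{lem:embedding-equality} and $\kappa_2\theta = \kappa_2$ to land on $\kappa_2$. You do the mirror image: discharge the equalities first to reach $\kappa_1\theta$, then replay the trace under $\theta$. The price of your ordering is that you must prove a full substitution/stability lemma for every rule of \cref{fig:statix-operational-semantics}, including the delicate interaction of $\theta$ with $\mathsf{Ans}$, with $\mathsf{guard}$, and with the uniqueness side-condition of \textsc{Op-Pred}; the paper's ordering sidesteps all of that with a single scheduling observation plus confluence. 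Your route is arguably more modular (a substitution lemma is reusable), but it is substantially more work, and your treatment of \textsc{Op-Query} only discusses scopes and labels, not the data map $\rho_{\mathcal{G}}$, which $\theta$ can also refine.

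There is one concrete gap: the ``classical fact'' you invoke for \textsc{Op-Eq} and \textsc{Op-Pred} is false as stated. If $\mathsf{mgu}(t_1, t_2) = \theta'$, it does \emph{not} follow that $\mathsf{mgu}(t_1\theta, t_2\theta)$ exists for an arbitrary $\theta$: take $t_1 = x$, $t_2 = f(a)$, $\theta = [x \mapsto f(b)]$. Such a $\theta$ is not excluded by the hypothesis $\kappa_2\theta = \kappa_2$ under naive configuration equality, because a variable that the trace unifies away no longer occurs in $\kappa_2$, so \emph{any} binding for it makes $\theta$ act as the identity on $\kappa_2$ while still contradicting a unification performed mid-trace. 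In that situation your second phase breaks at the first step that touches $x$. What is needed is that $\theta$ be \emph{consistent} with every unifier computed along the trace, and the paper obtains exactly this by the Remark accompanying the lemma: the operational semantics is assumed to track already-substituted variables, so that $\kappa_2\theta = \kappa_2$ genuinely constrains $\theta$ on those variables. Your proof needs to import that same assumption explicitly and thread it through the induction; with it, the mgu cases go through (the more-instantiated heads can only match fewer rules, so \textsc{Op-Pred}'s uniqueness premise is also preserved), and the rest of your argument is sound.
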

\begin{remark}
    This lemma assumes that the operational semantics tracks which variables we have unified and substituted already during constraint solving.
    Without this assumption, $\theta$ may contain information that contradicts $\kappa_1$, but (due to substitution) not represented in $\kappa_2$.
    For example, when we solve a constraint \smash{$f(x) \cEq f(g())$}, for some variable $x$ and term constructors $f$ and $g$, we substitute $x$ for~$g()$, and remember in all future states that $x$ was substituted for $g()$.
    The $\theta$ in the premise of the lemma below is assumed to be \emph{consistent} with the variables that we have substituted for already, where by consistent we mean the following.
    $\theta'$ is consistent with $\theta$ when $\forall x \in \mathsf{dom}(\theta).\ x\in \mathsf{dom}(\theta') \implies \mathsf{mgu}(\theta(x), \theta'(x))\neq \bot$.
    The operational semantics does not currently track information about which variables have been unified explicitly, but can be adapted to do so.
\end{remark}

\pagebreak[4]  

\begin{proof}
    The new equalities do not prevent $\overline{C}_1$ from being solved.\footnotemark[\getrefnumber{fn:blocking}]
    Thus, the following trace exists:
    \begin{mathpar}
      \kappa_1 \oplus \llbracket \theta \rrbracket
      \rightarrow^{\ast}
      \kappa_2 \oplus \llbracket \theta \rrbracket
    \end{mathpar}
    Then, because equalities can be solved unconditionally, we can solve those next.
    \begin{mathpar}
      \kappa_2 \oplus \llbracket \theta \rrbracket
      \rightarrow^{\ast}
      \kappa_2' = \opsRSState
        {\SG_2'}
        {\overline{C}_2 \theta}
        {U_2'}
        {H_2'}
    \end{mathpar}
    Then, by~\cref{lem:embedding-equality}, $\kappa_2' = \kappa_2 \theta$.
    By the assumption that $\kappa_2 \theta = \kappa_2$, we have that $\kappa_2' = \kappa_2$.
    Therefore, a trace $\kappa_1 \oplus \llbracket \theta \rrbracket \rightarrow^{\ast} \kappa_2$ exists.
\end{proof}

\subsection{Generated Proposals are Well-Typed}%
\label{subsec:well-typed-solutions}

Now, we prove our first soundness property: the algorithm only computes well-typed solutions.
\begin{theorem*}[Soundness 1: Well-Typed Solutions]
  \label{thm:soundness-1-appendix}
  \begin{alignat*}{1}
    \forall\mathbb{S}\, P_0\, \mathcal{P}\, \mathcal{P}'.\ & \mathbb{S} \vdash P_0 \mathrel{\mathbf{init}}\\
    \implies & \mathsf{synthesize}(\mathcal{P}) = \mathcal{P}'\\
    \implies & \exists\SG.\ %
      \mathbb{S} \vdash \opsRSState
        {\emptyset}
        {P_0(\mathcal{P}')}
        {\emptyset}
        {\emptyset}
      \rightarrow^{\bullet}
      \opsRSState
        {\SG}
        {\emptyset}
        {\emptyset}
        {\emptyset}
  \end{alignat*}
\end{theorem*}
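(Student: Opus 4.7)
The plan is to lift the non-deterministic synthesis trace into a deterministic $\rightarrow$-trace on the fully substituted program $\mathcal{P}'$. First I would unfold $\mathsf{synthesize}(\mathcal{P})$ to expose: the initial $\rightarrow^{\ast}$ phase yielding $\SG_0$ and the initial $H$; the speculative $\twoheadrightarrowtail^{\ast}$ phase leading to an accepting $\kappa_f = \opsRSState{\SG}{\emptyset}{U'}{H'}$; and the substitution $\theta_{result}$, which by $\mathsf{Accept}$ maps every hole variable in $\mathsf{dom}(U)$ to a ground reference term. Along the whole trace I would collect the composed substitution $\Theta$ produced by the \textsc{Op-Solve}, \textsc{Op-Expand-Pred}, and \textsc{Op-Expand-Query} steps, and observe by induction that $\Theta$ restricted to $\mathsf{dom}(U)$ coincides with $\theta_{result}$.

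The main technical step is to show that every step of the synthesis trace can be simulated by a regular $\rightarrow$-step once $\theta_{result}$ is pre-applied. For an \textsc{Op-Expand-Pred} step that committed to a rule $P(\overline{t_2}) \leftarrow C$, pre-substituting $\theta_{result}$ grounds the predicate arguments, and the disjointness assumption on $\mathbb{S}$ stated in \cref{subsec:statix-operational-semantics} makes this rule the \emph{unique} one whose head unifies, so \textsc{Op-Pred} fires with the same rule and substitution. For an \textsc{Op-Expand-Query} step, pre-substitution grounds the source scope, the chosen path still belongs to $\mathsf{Ans}$ for the final scope graph, and the $\mathsf{guard}$ condition transfers unchanged because the underlying constraint set is identical. \textsc{Op-Solve} steps are preserved directly. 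Combining this with \cref{lem:strengthening-eq} and \cref{lem:embedding-equality} yields a $\rightarrow^{\ast}$ trace from $\opsRSState{\emptyset}{P_0(\mathcal{P}_0)}{U}{\emptyset}\theta_{result} = \opsRSState{\emptyset}{P_0(\mathcal{P}')}{\emptyset}{\emptyset}$ to $\opsRSState{\SG}{\emptyset}{\emptyset}{\emptyset}$, using $\mathcal{P}' = \mathcal{P}_0 \theta_{result}$ and $U\theta_{result} = \emptyset$ since each hole variable is substituted for a ground term.

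The main obstacle is the simulation of \textsc{Op-Expand-Pred}: I must rule out that, after grounding via $\theta_{result}$, a \emph{different} rule head could also unify with the grounded predicate arguments, since \textsc{Op-Pred} requires the matching rule to be unique. This is precisely the purpose of the disjointness assumption on $\mathbb{S}$. A related subtlety is showing that the intermediate substitutions performed during synthesis are consistent with $\theta_{result}$ so that pre-applying it does not prune the replayed trace; this follows because each such substitution arose from $\mathsf{mgu}$ over variables already accounted for in the accumulated $\Theta$. Since the reached state has an empty constraint set, no further step is possible and the constructed trace is automatically a $\rightarrow^{\bullet}$ trace, giving the required well-typedness witness. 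The well-formedness premise $\mathbb{S} \vdash P_0 \mathrel{\mathbf{init}}$ plus confluence of $\rightarrow$ then ensures this is the only possible outcome, corroborating the construction.
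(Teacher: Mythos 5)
There is a genuine gap in your central simulation step. You claim that pre-applying $\theta_{result}$ lets each \textsc{Op-Expand-Query} (resp.\ \textsc{Op-Expand-Pred}) step of the synthesis trace be replayed as an ordinary \textsc{Op-Query} (resp.\ \textsc{Op-Pred}) step because ``pre-substitution grounds the source scope'' (resp.\ ``grounds the predicate arguments''). But $\theta_{result}$ only substitutes the \emph{hole variables} in $\mathsf{dom}(U)$ by their final reference terms; it says nothing about the intermediate existential variables introduced during solving, whose values were fixed by the speculative substitutions $\theta_2,\dots,\theta_{n-1}$. In the qualified-reference case this matters essentially: the inner query of $\UV{a}.\UV{x}$ has source scope $\UV{s_m}$, an existential bound by the rule body, and in the original synthesis trace that query is expanded \emph{before} $\UV{s_m}$ is known. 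After pre-applying $\theta_{result}$ the query is still not ground at the corresponding point of the replay, so \textsc{Op-Query} does not fire there and your step-by-step simulation breaks; the ground run must instead reorder the steps (resolve the qualifier first, then the inner query), and showing that such a reordered run exists and succeeds is the real content of the theorem.

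This is exactly where the paper's proof takes a different, and necessary, route: it keeps every intermediate expansion substitution as an explicit equality constraint $\llbracket\theta_i\rrbracket$ (so the strengthened traces are honest $\rightarrow$-traces), then strips these equalities one by one with the Equality Weakening lemma, conceding only that the resulting run ``either steps to $\kappa_n$ or gets \emph{stuck}''. The assumption $\mathbb{S} \vdash P_0 \mathrel{\mathbf{init}}$ --- the solver never gets stuck on a \emph{ground} program --- is then what excludes the stuck outcome for $P_0(\mathcal{P}')$. In your write-up this assumption appears only as a final ``corroboration'', but it is load-bearing: without it (or without a much stronger lemma showing that grounding the hole variables alone suffices to unblock every query and predicate in \emph{some} order) the simulation cannot be closed. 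I would restructure your argument so that the replay concludes with the disjunction ``success or stuck'' and then discharge stuckness via $\mathbf{init}$; your observations about rule-head disjointness and consistency of the intermediate unifiers remain useful for the \textsc{Op-Pred} case but do not by themselves repair the query case.
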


This property is stating that applying the initial constraint for any synthesized program $\mathcal{P}'$ is satisfiable;
\ie, can be reduced to a \emph{successful} solver state.

\begin{proof}
    We will prove this theorem by transforming the synthesis trace for $\mathcal{P}$ in a valid solver trace for $\mathcal{P}'$.
    Observe that each valid synthesis can be considered a trace with the following shape:
    \[
        \kappa_1 = \opsRSState
        {\emptyset}
        {P_0(\mathcal{P})}
        {U_1}
        {H_1}
        \rightarrow^{\bullet}
        \kappa_2
        \rightarrowtail
        \kappa_2'
        \twoheadrightarrowtail^{\bullet}
        \kappa_n = \opsRSState
        {\SG_n}
        {\emptyset}
        {U_n}
        {H_n}
    \]
    where $\kappa_2' = \kappa_2\theta_2$, given that $\theta_2$ is the substitution that was computed in the expand step $\kappa_2 \rightarrowtail \kappa_2'$.
    By~\cref{lem:embedding-equality}, we can create a new trace as follows:
    \[
        \kappa_2 \oplus {\color{red} \llbracket \theta_2 \rrbracket}
        \rightarrow^{\bullet}
        \kappa_2'
        \twoheadrightarrowtail^{\bullet}
        \kappa_n
    \]
    By repeating this on all expand steps, we can create the following sequence of traces:
    \begin{alignat*}{1}
        \kappa_1 &\rightarrow^{\bullet} \kappa_2 \\
        \kappa_2 \oplus \llbracket \theta_2 \rrbracket &\rightarrow^{\bullet} \kappa_3 \\
        &\ldots \\
        \kappa_{n-1} \oplus \llbracket \theta_{n-1} \rrbracket &\rightarrow^{\bullet} \kappa_n
    \end{alignat*}
    These traces are all valid solver traces (i.e., each step is valid according to one of the \textsc{Op-*} rules).

    Next, we use $\theta_{\text{result}}$ from $\mathsf{synthesize}$ to eliminate all intermediate equalities $\theta_2 \ldots \theta_{n-1}$.
    Because $\theta_{\text{result}}$ is derived from $U_n$ and $H_n$, $\theta_{\text{result}}$ is redundant in $\kappa_n$ (\ie, $\kappa_n \theta_{\text{result}} = \kappa_n$).
    When prepending the embedding of this unifier to the last solver trace, by~\cref{lem:strengthening-eq}, we construct the following trace:%
    \footnote{
        Note that the variables in $\theta_{\text{result}}$ are free in the original constraint.
        For that reason, assuming the state tracks non-free variables (see the remark at~\cref{lem:strengthening-eq}), we can freely move around this constraint.
    }
    \[
        \kappa_{n-1} \oplus \llbracket \theta_{n-1} \rrbracket \oplus
            {{\color{red} \llbracket \theta_{\text{result}} \rrbracket}}
        \rightarrow^{\bullet}
        \kappa_n
    \]
    Using~\cref{lem:weakening} (removing $\llbracket \theta_{n-1} \rrbracket$), Statix' confluence, and the fact that all variables in $\theta_{\text{result}}$ are free in the original constraint, we infer that $\kappa_{n-1} \oplus
    {{\llbracket \theta_{\text{result}} \rrbracket}}$ either steps to $\kappa_4$, or gets stuck.
    By applying the same reasoning to the previous traces, we can conclude that $\kappa_1 \oplus {\llbracket \theta_{\text{result}} \rrbracket}$ either steps to $\kappa_4$, or gets stuck.

    Using confluence and~\cref{lem:embedding-equality}, together with the fact that ${\llbracket \theta_{\text{result}} \rrbracket}$ can be solved first, we infer the following trace exists:
    \[
        \kappa_1 \oplus {\llbracket \theta_{\text{result}} \rrbracket}
        \rightarrow^{\ast}
        \kappa_1' = \opsRSState
        {\emptyset}
        {\overline{C}_1\theta_{\text{result}}}
        {U_1'}
        {H_1'}
        \rightarrow^{\bullet}
        \kappa_n'
    \]
    where $\kappa_n' = \kappa_n$ or some stuck state.
    Note that $\overline{C}_1\theta_{\text{result}} = {P_0(\mathcal{P}')}$,
    and hence the intermediate state~$\kappa_1'$ is exactly the initial state for solving $P_0(\mathcal{P}')$.

    Finally, by our assumption that the initial constraint cannot get stuck on a ground program, $\kappa_n'$ cannot be stuck.
    Hence $\kappa_n' = \kappa_n$, which is a successful state.
    For that reason, we conclude that $\mathcal{P}'$ is well-typed.

\end{proof}



\subsection{Synthesized References correspond to Composite Paths}%
\label{subsec:composite-paths}

Next, we prove our second soundness property: the synthesized references correspond to a composite path to the target.
First, we restate our definition of composite path as a judgment $\kappa \vdash s^{\ast} \ \textsc{cp}$, stating that $s^{\ast}$ is a composite path in $\kappa$.
\begin{definition}[Composite Path]
  \[
    \inferrule[\textsc{CP-Single}]{
    }{
      \kappa \vdash s \ \textsc{cp}
    }
    \qquad
    \inferrule[\textsc{CP-Comp}]{
      q = \qBase{o}{s_1}{r}{\lambda x.\, E} \textit{ is solved in } \kappa
      \\
      p \in \mathsf{Ans}(\SG_\kappa, q)
      \\
      \left(\mathsf{tgt}(p) = s_2 \lor s_2 \in \rho_{\SG_{\kappa}}(\mathsf{tgt}(p)) \right)
      \\
      \kappa \vdash s_2 \cdot s^{*} \ \textsc{cp}
    }{
      \kappa
      \vdash
        s_1 \cdot s_2 \cdot s^{*} \ \textsc{cp}
    }
  \]
\end{definition}
\begin{remark}
    The phrase `$q$ is solved in $\kappa$' can be made precise by making the state transitions ($\rightarrow$) track solved constraints explicitly.
\end{remark}
We use this definition to define well-formed hole states $H$:
\begin{definition}[Well-formed $H$]
  \[
    \inferrule[\textsc{H-WF}]{
      \forall (s^{\ast}, t) \in \mathsf{ran}(H_\kappa).\ \kappa \vdash s^{\ast} \  \textsc{cp}
    }{
      \kappa \  \mathsf{wf}_H
    }
  \]
\end{definition}

This definition states that a hole state $H_\kappa$ is well-formed if for each hole state $(s^{\ast}, t) \in \mathsf{ran}(H_\kappa)$, $s^{\ast}$ is a composite path in $\kappa$.
With this definition of composite paths, we can state our second soundness property as follows.

\begin{theorem}[Soundness 2: Solutions correspond to Composite Paths]
  \label{thm:soundness-2-appendix}
  \begin{mathpar}
  \forall \kappa_0\, \kappa_1.\ %
      \left( \kappa_0 \textit{ initialized by } \mathsf{synthesize} \land \kappa_0 \twoheadrightarrowtail^{\bullet}  \kappa_1 \right)
    \implies
      \kappa_1 \ \mathsf{wf}_H
  \end{mathpar}
\end{theorem}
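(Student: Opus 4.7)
The plan is to prove the theorem by induction on the length of the $\twoheadrightarrowtail$-trace from $\kappa_0$ to $\kappa_1$. For the base case, $\mathsf{synthesize}$ initializes each hole state as $H(h) = (s_d, x)$ with a single scope, so $\kappa_0 \vdash s_d\ \textsc{cp}$ holds by \textsc{CP-Single}, giving $\kappa_0\ \mathsf{wf}_H$. For the inductive step, suppose $\kappa \twoheadrightarrowtail \kappa'$ with $\kappa\ \mathsf{wf}_H$; I would then do case analysis on which $\twoheadrightarrowtail$ rule fired, and within \textsc{Op-Expand}, on which underlying $\rightarrowtail$ rule applied.

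For \textsc{Op-Solve} and \textsc{Op-Expand-Pred} steps, the scope-sequence component of every hole state is unchanged: the rules of $\rightarrow$ touch $H$ only by pointwise substitution, and scope identifiers appearing in the composite-path judgment are ground and thus invariant under substitution. Hence it suffices to check that the query witnesses supporting pre-existing composite paths in $\kappa$ remain valid in $\kappa'$. This is the content of a monotonicity lemma I would establish separately, stating that any query already solved in $\kappa$ has its source scope, answer set, and target-data relation $\rho_\SG$ preserved when passing to $\kappa'$. Intuitively this holds because Statix's $\mathsf{guard}$ premise rules out any future edge addition that could change an already-solved query's answer. The substantive case is \textsc{Op-Expand-Query}, where exactly one hole $h$ has its scope sequence extended from $(s_t \cdot \overline{s'_t}, t')$ to $(s' \cdot s_t \cdot \overline{s'_t}, t')$. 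By the inductive hypothesis $\kappa \vdash s_t \cdot \overline{s'_t}\ \textsc{cp}$, and the monotonicity lemma transfers this judgment to $\kappa'$. The premises of \textsc{Op-Expand-Query} additionally supply an answer $p \in \mathsf{Ans}(\SG\theta_1\theta_2, \qBase{\orderSyntax}{s'}{r}{\lambda y.\, E\theta_1\theta_2})$ with $\mathsf{tgt}(p) = s''$, together with $s_t = s''$ or $s_t \in \rho_\SG(s'')$. This is precisely the data required to apply \textsc{CP-Comp} and conclude $\kappa' \vdash s' \cdot s_t \cdot \overline{s'_t}\ \textsc{cp}$.

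The main obstacle I anticipate is the monotonicity lemma: one must show that solved queries retain their source scope, regex, well-formedness predicate, and answer set across both $\rightarrow$ and $\rightarrowtail$ steps. For $\rightarrow$ this follows from the existing Statix guarding discipline of~\citet{RouvoetAPKV20}. For $\rightarrowtail$ it is more delicate, because \textsc{Op-Expand-Query} can apply a nontrivial unifier $\theta_1\theta_2$ that refines metavariables elsewhere in the state; however, $\theta_1$ merely identifies a source-scope variable with a concrete scope already in $\SG$, and $\theta_2$ arises from an equality constraint on target-scope data, so neither creates new scopes or edges. A further subtlety worth flagging is the notion of ``$q$ is solved in $\kappa$'' required by \textsc{CP-Comp}: since Op-Expand-Query leaves the refined query constraint in $\overline{C}$ rather than consuming it, the cleanest statement is to prove well-formedness at accept states (where $\overline{C} = \emptyset$ and every query has been discharged by Op-Query), which is sufficient for the subsequent $\mathsf{synthesize}$ contract. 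Once the monotonicity lemma is in place, the remaining case analysis is routine.
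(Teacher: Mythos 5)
Your overall skeleton matches the paper's: the base case via \textsc{CP-Single} on the singleton target scope, the observation that $\rightarrow$ and \textsc{Op-Expand-Pred} leave the (ground) scope-sequence components untouched, and the use of the \textsc{Op-Expand-Query} premises as exactly the data needed for \textsc{CP-Comp}. You have also correctly spotted the one real difficulty: \textsc{Op-Expand-Query} prepends a scope whose justifying query is \emph{not yet solved}, so the state reached immediately after that step does not satisfy $\mathsf{wf}_H$, and a naive per-step induction on the $\twoheadrightarrowtail$-trace breaks exactly there.

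Where your proposal falls short is in how it closes that gap. The paper does \emph{not} retreat to proving well-formedness only at accept states; instead it decomposes the trace into segments of the form $\kappa_{i-1} \rightarrowtail \kappa_i' \rightarrow^{\bullet} \kappa_i$ (one expand step followed by \emph{exhaustive} solving) and proves a dedicated lemma that $\mathsf{wf}_H$ is restored at the end of each segment. The crux of that lemma is an argument you do not supply: the premises of \textsc{Op-Expand-Query} entail the premises of \textsc{Op-Query} for the instantiated query $q\theta_1\theta_2$ (including the $\mathsf{guard}$ condition), so $q$ is not blocked and \emph{will} be discharged during the subsequent $\rightarrow^{\bullet}$ phase; confluence of $\rightarrow$ then guarantees that the exhaustively solved state agrees with the one in which $q$ was solved first, so the ``$q$ is solved in $\kappa$'' side condition of \textsc{CP-Comp} holds at the segment boundary. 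Your alternative --- proving the invariant only at accept states --- could be made to work, but it silently changes the statement being proved and requires formulating and maintaining a weaker intermediate invariant (``every prepended scope is justified by either a solved query or a pending, guaranteed-solvable query with the matching answer''), which is essentially the content of the paper's lemma restated; without it, your induction has no hypothesis to propagate through the \textsc{Op-Expand-Query} case. Your separate ``monotonicity lemma'' concern is legitimate but is handled much more lightly in the paper: stability of already-solved queries is delegated to the existing $\mathsf{guard}$ discipline and to the remark that solved constraints are tracked explicitly, so it does not need to be developed as an independent lemma.
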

This theorem states that every synthesis result yields hole states $(s^{\ast}, t) \in \mathsf{ran}(H_1)$ such that $s^{\ast}$ forms a composite path.
To prove this theorem, we need the lemma that $\rightarrow$ preserves hole state well-formedness.
\begin{lemma}
  \label{lem:solve-preserves-wfh}
  \begin{mathpar}
    \forall \kappa_0\, \kappa_1.\ %
      \kappa_0 \ \mathsf{wf}_H
      \land
      \kappa_0 \rightarrow \kappa_1
    \implies
      \kappa_1 \ \mathsf{wf}_{H}
  \end{mathpar}
\end{lemma}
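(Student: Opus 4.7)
The plan is to proceed by straightforward case analysis on which rule of \cref{fig:statix-operational-semantics} produced $\kappa_0 \rightarrow \kappa_1$, showing for each rule that every $(s^\ast, t) \in \mathsf{ran}(H_1)$ still satisfies $\kappa_1 \vdash s^\ast \ \textsc{cp}$. A useful observation to set up first is that none of the $\rightarrow$ rules ever rewrites the scope-sequence component of an entry in $H$: the purely constraint-rewriting rules (\textsc{Op-Conj}, \textsc{Op-Emp}, \textsc{Op-Exists}, \textsc{Op-Singleton}, \textsc{Op-Forall}) pass $H$ through untouched, while the rules that do apply a unifier $\theta$ (\textsc{Op-Eq}, \textsc{Op-New-Scope}, \textsc{Op-Pred}, \textsc{Op-Data}) only substitute into the term component $t$, since every inhabitant of $s^\ast$ is a ground element of $\mathit{Scope}$. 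With this in hand, the \textsc{CP-Single} base case is immediate.

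For the \textsc{CP-Comp} case the work is to show that its three side-conditions are preserved: (i) the witnessing query $q$ remains solved in $\kappa_1$, (ii) the witness path satisfies $p \in \mathsf{Ans}(\SG_1, q)$, and (iii) $s_2 \in \rho_{\SG_1}(\mathsf{tgt}(p))$. Solvedness is monotone under $\rightarrow$ in the remark's augmented semantics that tracks discharged constraints, so (i) carries over. For (iii) the rule \textsc{Op-New-Scope} only adds fresh scopes and \textsc{Op-Eq}/\textsc{Op-Pred}-style substitutions only instantiate variables inside existing $\rho$-entries; neither operation can delete a ground subterm, so existing containments are preserved. The only delicate rules for condition (ii) are \textsc{Op-New-Edge} and \textsc{Op-New-Scope}, which can in principle alter the answers of queries; but these alterations are ruled out for any already-solved $q$ by exactly the $\mathsf{guard}$ premise that \textsc{Op-Query} enforces at the time $q$ was solved. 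The remaining rules neither extend $\SG$ nor disturb recorded answers, so (ii) is trivially preserved in those cases.

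The main obstacle is reconciling the substitution-carrying rules with the "solved in $\kappa$" predicate for queries. After an \textsc{Op-Eq} or \textsc{Op-Pred} step, the query term $q$ recorded as solved in $\kappa_0$ lives in $\kappa_1$ only up to the unifier $\theta$, and likewise its answer $p$ may become $p\theta$. I expect to handle this by taking the augmented semantics to store the answer alongside the solved-query record and to apply $\theta$ to both uniformly, so that the preservation claim reduces to showing that $\mathsf{Ans}$ is equivariant under substitution into ground scopes — a consequence of the standard monotonicity and stability properties of Statix established by \citet{RouvoetAPKV20}. Once this bookkeeping is in place, each remaining rule case is a one-line check: either it leaves $\SG$, the solved queries, and the relevant $\rho$-entries untouched, or it extends them in exactly the monotone way that \textsc{CP-Comp}'s premises already tolerate.
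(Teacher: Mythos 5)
Your proof is correct and follows the same basic skeleton as the paper's: case analysis on $\rightarrow$, anchored on the observation that no rule rewrites the scope-sequence component of $H$ (substitution only touches the term component, since the scopes are ground). The paper's own proof stops there, in a single sentence. Where you go further is in actually discharging the side conditions of \textsc{CP-Comp} after the step: you note that $\kappa \ \mathsf{wf}_H$ depends not only on $H$ but also on $\SG_\kappa$, on the solved-query record, and on $\rho_{\SG_\kappa}$, so preserving $H$ alone is not literally sufficient. Your appeal to the $\mathsf{guard}$ premise (checked when the witnessing query was originally solved) to rule out \textsc{Op-New-Edge} invalidating $\mathsf{Ans}(\SG,q)$ for already-solved queries, and your remark that the ``solved in $\kappa$'' bookkeeping must carry the answer along under substitution, address genuine subtleties that the paper's one-line argument elides. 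This buys a more honest proof at the cost of leaning explicitly on the stability/monotonicity properties of \citet{RouvoetAPKV20} and on the augmented semantics that tracks solved constraints --- both of which the paper also implicitly assumes (see its remark following the definition of composite paths). No gap; if anything, your version is the one that would survive formalization.
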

\begin{proof}
    This can be proven by case analysis on $\rightarrow$, where none of the rules modify the hole state (apart from substitution, which does not affect the $s^{\ast}$ component, as it is ground).
\end{proof}

\pagebreak[4]  

Ideally, we would prove a similar lemma for $\twoheadrightarrowtail$ as well.
However, this cannot be done, as \textsc{Op-Expand-Query} prepends a scope to the composite path in $H(h)$, for which the corresponding query is not yet solved.
Instead, we prove the following lemma, that states that, after an expand-step, subsequent solving ensures that $H$ will \emph{eventually} be well-formed.

\begin{lemma}
  \label{lem:expand-eventually-preserves-wfh}
  \begin{mathpar}
    \forall \kappa_0\, \kappa_1\, \kappa_2.\ %
      \kappa_0 \ \mathsf{wf}_{H}
      \land
      \kappa_0 \rightarrowtail \kappa_1 \rightarrow^{\bullet} \kappa_2
    \implies
      \kappa_2 \ \mathsf{wf}_{H}
  \end{mathpar}
\end{lemma}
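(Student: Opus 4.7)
The plan is to proceed by case analysis on the rule used in $\kappa_0 \rightarrowtail \kappa_1$: either \textsc{Op-Expand-Pred} or \textsc{Op-Expand-Query}. In both cases the strategy is to first show that $\kappa_1$ is ``morally'' well-formed and then argue that $\rightarrow^{\bullet}$ completes the necessary work without invalidating any composite path. I would rely on the monotonicity of Statix' scope-graph construction (edges and scopes are only added, never removed, and once a query is solved, it stays solved and its answer set only grows), on confluence of $\rightarrow$, and on the fact that every scope sequence $s^{*}$ in the range of $H$ is ground so that substitutions from $\theta$ leave these sequences untouched.

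For \textsc{Op-Expand-Pred}, the rule only expands a predicate and applies a unifier $\theta$; it does not modify the structural part of $H$. Hence $\kappa_1 \, \mathsf{wf}_H$ follows directly from $\kappa_0 \, \mathsf{wf}_H$. A straightforward induction on the length of the $\rightarrow^{\bullet}$ trace, using \cref{lem:solve-preserves-wfh} at each step, then yields $\kappa_2 \, \mathsf{wf}_H$.

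For \textsc{Op-Expand-Query}, the rule modifies $H(h)$ by prepending a scope $s'$, producing $(s' \cdot s_t \cdot \overline{s'_t}, t')$. All other entries of $H$ are unaffected (modulo ground-preserving substitution); for those, well-formedness carries over to $\kappa_2$ by monotonicity, since the previously-solved witnesses of each \textsc{CP-Comp} step are preserved and their answer sets only grow. For the modified entry, the tail $s_t \cdot \overline{s'_t}$ remains a composite path in $\kappa_2$ by the same monotonicity argument, and we establish the new prepended edge $s' \to s_t$ using \textsc{CP-Comp}: the premises of \textsc{Op-Expand-Query} already provide a path $p \in \mathsf{Ans}(\SG\theta_1\theta_2, \qBase{o}{s'}{r}{\lambda y.\, E\theta_1\theta_2})$ with $\mathsf{tgt}(p) = s''$ satisfying $s_t = s''$ or $s_t \in \rho_{\SG}(s'')$, and monotonicity preserves this witness in $\kappa_2$.

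The main obstacle is ensuring that the corresponding query constraint \smash{$\cQuery[\orderSyntax]{s'}{r}{\lambda y.\, E\theta_1\theta_2}{z}.\, C\theta_1\theta_2$}, which remains in $\overline{C}_1$ after the expand step, is actually \emph{solved} in $\kappa_2$, so that the ``$q$ is solved in $\kappa$'' side condition of \textsc{CP-Comp} is satisfied under the tracking augmentation mentioned in the remark following the definition. Immediately after the expand step the query is ready to fire: its source $s'$ is ground and its $\mathsf{guard}$ premise holds by construction. The plan is then to argue that neither intervening $\rightarrow^{\bullet}$ steps on other constraints nor the eventual application of \textsc{Op-Query} to this query can prevent its discharge; otherwise $\kappa_2$ would still admit a $\rightarrow$-step on it (via \textsc{Op-Query}), contradicting exhaustiveness. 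Formalising this requires invoking monotonicity of $\mathsf{guard}$ and confluence of $\rightarrow$ as established by \citet{RouvoetAPKV20}, and is the step where the bookkeeping is most delicate.
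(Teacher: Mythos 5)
Your proposal is correct and follows essentially the same route as the paper's proof: the same case analysis on the expand step, the trivial \textsc{Op-Expand-Pred} case via \cref{lem:solve-preserves-wfh}, and for \textsc{Op-Expand-Query} the same key observation that the rule's premises entail those of \textsc{Op-Query}, so the query is immediately solvable and must be discharged in $\kappa_2$. The only (cosmetic) difference is that you close the argument by contradiction with exhaustiveness of $\rightarrow^{\bullet}$ plus guard monotonicity, whereas the paper constructs an alternative trace that solves the query first and appeals to confluence to identify the endpoints.
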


\begin{proof}
  This lemma can be proven by case analysis on the $\kappa_0 \rightarrowtail \kappa_1$ step:
  \begin{itemize}
    \item \textsc{Op-Expand-Pred}: This rule does not affect $H$, hence $\kappa_1 \ \mathsf{wf}_{H}$ and by~\cref{lem:solve-preserves-wfh}, $\kappa_2 \ \mathsf{wf}_{H}$.
    \item \textsc{Op-Expand-Query}: This rule expands a query \smash{$q = \qBase{o}{s'}{r}{\lambda x.\, E}$}.
      As a result, it prepends $s'$ to the composite path \smash{$(s'' \cdot s^{\ast}, t) \in \mathsf{ran}(H_0)$}.
      When proving $\kappa_1 \ \mathsf{wf}_{H}$, by inversion using \textsc{H-WF} and \textsc{CP-Comp}, we have the following proof obligations:
      \begin{itemize}
          \item $q$ is solved in $\kappa_1$,
          \item $p \in \mathsf{Ans}(\SG_1, q) \land (\mathsf{tgt}(p) = s'' \lor s'' \in \rho_{\SG_1}(\mathsf{tgt}(p)))$, and
          \item $\kappa_1 \vdash s'' \cdot s^{*} \ \textsc{cp}$.
      \end{itemize}
      The second obligation is a direct consequences of the \textsc{Op-Expand-Query} rule, and the last obligation is a direct consequence of the $\kappa_0 \ \mathsf{wf}_{H}$ assumption.
      However, the first premise is not yet satisfied, as \textsc{Op-Expand-Query} does not actually solve $q$.

      To prove that $\kappa_2 \ \mathsf{wf}_{H}$, we need to show that $q$ is solved in $\kappa_2$.
      To this end, observe that the premises of \textsc{Op-Expand-Query} entail the premises of \textsc{Op-Query} for $q$.
      Hence, $q$ is not \emph{blocked}; it \emph{can} be solved in $\kappa_1$.
      Thus, a state $\kappa_1'$ exists such that $\kappa_1 \rightarrow \kappa_1'$ and $q$ is solved in $\kappa_1'$.
      For that reason, $\kappa_1' \ \mathsf{wf}_{H}$.
      Now, consider the state $\kappa_2'$ obtained after solving $\kappa_1'$ completely (\ie, $\kappa_1' \rightarrow^{\bullet} \kappa_2'$).
      By~\cref{lem:solve-preserves-wfh}, $\kappa_2' \ \mathsf{wf}_{H}$.
      Finally, by confluence of the operational semantics, $\kappa_2' \approx \kappa_2$, and hence $\kappa_2 \ \mathsf{wf}_{H}$.
  \end{itemize}
\end{proof}

With these lemmas in place, we can prove~\cref{thm:soundness-2}:
\begin{proof}
  Observe that any synthesis trace $\kappa_0 \twoheadrightarrowtail^{\bullet} \kappa_n$ can be decomposed into parts that start with an expand step, and a sequence of solve steps:
  \[
    \left[\kappa_{i-1} \rightarrowtail \kappa_i' \rightarrow^{\bullet} \kappa_i  \right]_{i \in 1\ldots n}
  \]
  We can prove the theorem using induction over the segments $1 \ldots n$ of the synthesis trace:
  \begin{itemize}
    \item \textit{Base case}: $ \kappa_0 \ \mathsf{wf}_{H}$.
        This case is trivial, as each value in $H_0$ is initialized to the target scope~$s_d$, which is a composite path by \textsc{CP-Single}.
    \item \textit{Inductive case}: Follows immediately from~\cref{lem:expand-eventually-preserves-wfh}.
  \end{itemize}
  This proves $\kappa_n \ \mathsf{wf}_{H}$, and hence each synthesized reference corresponds to a composite path.
\end{proof}

}{}

\end{document}